\newtheoremstyle{mystyle}
  {}
  {}
  {\itshape}
  {}
  {\bfseries}
  {.}
  { }
  {\thmname{#1}\thmnumber{ #2}\thmnote{ (#3)}}
\theoremstyle{mystyle}
\newtheorem{theorem}{Theorem}
\newtheorem{lemma}[theorem]{Lemma}
\newtheorem{corollary}[theorem]{Corollary}
\newtheorem{definition}{Definition}
\newcommand{\past}{\textit{past}}
\newcommand{\fee}{\textit{fee}}
\newtheorem*{definition*}{Definition}
\newcommand{\mytodo}[2]
{\ifnum\Comments=1
	{\marginpar{\small{\color{#1}	#2}}}\fi}
\newcommand{\nosemic}{\renewcommand{\@endalgocfline}{\relax}}
\newcommand{\dosemic}{\renewcommand{\@endalgocfline}{\algocf@endline}}
\let\oldnl\nl
\newcommand{\nonl}{\renewcommand{\nl}{\let\nl\oldnl}}
\title{ABC: Proof-of-Stake without Consensus}
\author{Jakub Sliwinski and Roger Wattenhofer}
\email{{jsliwinski,wattenhofer}@ethz.ch}
\affiliation{\institution{ETH Zurich}}
\begin{document}

\begin{abstract}
We introduce a new permissionless blockchain architecture called ABC. ABC is completely asynchronous, and does rely on neither randomness nor proof-of-work. ABC can be parallelized, and transactions have finality within one round trip of communication. However, ABC satisfies only a relaxed form of consensus by introducing a weaker termination property. 
Without full consensus, ABC cannot support certain applications, in particular ABC cannot support general smart contracts. However, many important applications do not need general smart contracts, and ABC is a better solution for these applications. In particular, ABC can implement the functionality of a cryptocurrency like Bitcoin, replacing Bitcoin's energy-hungry proof-of-work with a proof-of-stake validation.
\end{abstract}

\maketitle

\section{Introduction}\label{sec:intro}

Nakamoto's Bitcoin protocol \cite{nakamoto2008bitcoin} has taught the world how to achieve trust without a designated trusted party. The Bitcoin architecture provides an interesting deviation from classic distributed systems approaches, for instance by using proof-of-work to allow anonymous participants to join and leave the system at any point, without permission.

However, Bitcoin's proof-of-work solution comes at serious costs and compromises. The security of the system is directly related to the amount of investments in designated proof-of-work hardware, and to spending energy to run that hardware. Since the system's participants that provide the distributed infrastructure (often called miners or validators) bear significant costs (hardware, energy), the protocol compensates them with Bitcoins. However, adversaries might disrupt this scheme by bribing the miners to behave untruthfully or disrupt the reward payments.

To make matters worse, proof-of-work protocols assume critical requirements related to the communication between the participants regarding message loss and timing guarantees. In other words, such protocols are vulnerable to attacks on the underlying network.

In the decade since the original Bitcoin publication, researchers have tried to address the wastefulness of proof-of-work. One of the most prominent research directions is replacing Bitcoin's proof-of-work with a proof-of-stake approach. In proof-of-stake designs, miners are replaced with participants who contribute to running the system according to the amounts of cryptocurrency they hold. Alas, proof-of-stake protocols require similar communication guarantees as proof-of-work, and thus can also be attacked by disrupting the network. Moreover, proof-of-stake introduces some of its own problems. Prominently, existing proof-of-stake designs critically rely on randomness. To achieve consensus, the participants of such systems repeatedly choose a leader among themselves. Despite being random, this choice needs to be taken collectively and in a verifiable way, which complicates the problem.

Due to the way blockchains typically process transactions, participants have to wait significant amount of time before they can be confident that their transactions are accepted by the system. For example, it usually takes around an hour for merchants to accept Bitcoin transactions as confirmed, which is unacceptable for time-sensitive applications.

In his seminal paper, Nakamoto made the crucial assumption that his system has to be able to totally order the transactions submitted to the system in order to reject the fraudulent ones. However, meeting this requirement is equivalent to solving the problem known as consensus. Nakamoto's assumption has shaped the design of blockchain systems to this day. Thus, many blockchain systems achieve consensus while not taking advantage of this powerful property, but suffering the associated costs.

\paragraph{\bf Our Contribution}\label{sec:contrib}
\renewcommand{\arraystretch}{1.5}
\begin{table*}\caption{Comparison of ABC to selected BFT/blockchain protocols. Permissioned protocols are on the left, permissionless protocols on the right. We mark all protocols providing full consensus as supporting general smart contracts, even though particular implementations might not feature smart contracts.}\label{tbl:comparison}
\centering
\begin{tabularx}{\textwidth}{|c| >{\centering\arraybackslash}X| >{\centering\arraybackslash}X| >{\centering\arraybackslash}X !{\vrule width 1pt} >{\centering\arraybackslash}X| >{\centering\arraybackslash}X| >{\centering\arraybackslash}X| >{\centering\arraybackslash}X| } \toprule
& PBFT\cite{castro1999practical} & HoneyBadger BFT\cite{miller2016honey} & Broadcast-based\cite{gupta2016non} & Bitcoin and Ethereum\cite{wood2014ethereum} & Ouroboros\cite{kiayias2017ouroboros} & Algorand\cite{gilad2017algorand} & ABC \\ \midrule
Permissionless & & & & \checkmark &\checkmark &\checkmark & \checkmark \\ \hline
\makecell{Proof-of-work \\ free} & \checkmark &\checkmark &\checkmark & & \checkmark &\checkmark &\checkmark \\ \hline
Finality & \checkmark & \checkmark &\checkmark & & & \checkmark &\checkmark \\ \hline
Asynchronous &  & \checkmark & \checkmark & & & & \checkmark \\ \hline
Deterministic & \checkmark & & \checkmark & & & & \checkmark \\ \hline
Parallelizable & & & \checkmark & & & & \checkmark \\ \hline
\makecell{General smart \\ contracts} &\checkmark &\checkmark & & \checkmark &\checkmark & \checkmark & \\ \bottomrule
\end{tabularx}

\end{table*}

We relax the usual notion of consensus to extract the requirements necessary for an efficient cryptocurrency. Thus we introduce an asynchronous blockchain design that features an array of advantages compared to alternatives. In other words, we present an Asynchronous Blockchain without Consensus (ABC). ABC offers a host of exciting properties:  

\begin{description}
\item[Permissionless:] Most importantly, ABC offers its advantages without relying on permissioned participation. ABC is permissionless in the same way as other proof-of-stake systems, where participants of the system freely exchange cryptocurrency tokens. Token holders run the system by verifying new transactions. Additionally, any token holder can indicate any other participant to take his part in this process, but preserving his ownership of the associated tokens.

\item[Asynchronous:] ABC does not require the messages to be delivered within any known period of time. Thus ABC is fully resilient to all network-related threats, such as delaying messages, denial-of-service or network eclipse attacks. An adversary having complete control of the network obviously can delay progress of the system (by simply disabling communication), but otherwise cannot interfere with the protocol or trick the participants in any way. Previously approved transactions cannot be invalidated and impermissible transactions cannot be approved. 

\item[Parallelizable:] In ABC, validators running the system can parallelize the processing of transactions. There is no limit to the number of transactions a validator can process by parallelization.

\item[Final:] Under normally functioning network communication, transactions in ABC are instantly confirmed. Confirmation is final and impossible to revert. This is in stark contrast to systems such as Bitcoin, where the confidence in a transaction being confirmed only probabilistically increases with the passage of time.

\item[Deterministic:] We assume the functionality provided by asymmetric encryption and hashing. Apart from these cryptographic necessities, ABC is completely deterministic and surprisingly simple.

\item[Proof-of-stake:] Unlike proof-of-work, the security of the system does not depend on the amount of devoted resources such as energy, computational power, memory, etc. Instead, similarly to other proof-of-stake protocols, ABC requires that more than two thirds of the system's cryptocurrency is held by honest participants.

\end{description}

However, ABC does not support consensus. This prevents ABC from supporting applications that involve smart contracts open for interaction with anybody. For example, the smart contract functionality of Ethereum cannot be directly implemented with ABC.
Many important applications (e.g., cryptocurrencies or IoT systems), do not require consensus, and ABC offers an advantageous solution for these applications.

Table \ref{tbl:comparison} compares the properties of ABC with some of the most relevant existing BFT/blockchain paradigms. Many more protocols exist that improve some aspects, for example many protocols improve upon PBFT. While many of these protocols are more scalable and efficient than the original PBFT, they share the fundamental disadvantages of PBFT: They are not permissionless, they are not parallelizable, and in order to make progress (``liveness''), they need synchronous communication.

Table \ref{tbl:comparison} shows the close relation of ABC with broadcast-based protocols. One may argue that ABC brings the simplicity, robustness and efficiency of broadcast-based protocols to the permissionless world.

\section{Relaxing Consensus}\label{sec:relaxing-consensus}

A cryptocurrency needs to be resilient to some of the participants of the system (called agents) behaving maliciously. The problem of establishing consensus in such an environment is also known as Byzantine agreement. The agents behaving truthfully are called honest, and malicious agents are called Byzantine.

In the context of a cryptocurrency, some form of consensus is used to solve the problem of double-spending: Suppose Alice holds one cryptocurrency coin. Now Alice sets up a transaction that transfers her coin to Bob (in exchange for a good or service). However, Alice wants to cheat, trying to simultaneously spend the same coin in another transaction to Carol. Upon receiving one (or both) of Alice's transactions, honest agents need to agree on what happens to Alice's coin, preventing Alice from doubling her money. In this context, according to the usual definition, achieving consensus consists of the following requirements:
\begin{definition*}[Consensus]\label{def:consensus}
\mbox{}
\begin{description}
    \item[Agreement:] If some honest agent accepts a transaction, every honest agent will accept the same transaction. No two conflicting transactions are  accepted.
    \item[Validity:] If every honest agent observes the same transaction (there are no conflicting transactions), this transaction is accepted by honest agents.
    \item[Termination:] Every honest agent accepts one of the transactions. If messages are delivered quickly, the consensus protocol terminates quickly.
\end{description}
\end{definition*}

The key insight leading to the relaxation of this definition, is that cheaters do not need to enjoy any guarantees. Alice from above tried to cheat by issuing (cryptographically signing) two conflicting transactions. ABC will not need to guarantee that any of Alices's two conflicting transactions will be accepted. In fact, if she tries to cheat, Alice might lose her coin.

On the other hand, an honest Alice will only create one transaction spending her coin. Thus, every honest agent will see the same transaction. Hence we can relax consensus to guarantee termination only for honest agents:

\begin{definition*}[ABC Consensus]\label{def:abcconsensus}
\mbox{}
\begin{description}
    \item[Agreement:] Same as above.
    \item[Validity:] Same as above.
    \item[Honest-Termination] If every honest agent observes the same transaction (and no conflicting transactions), this transaction is accepted by all honest agents. If messages are delivered quickly, the consensus protocol terminates quickly.
\end{description}
\end{definition*}

Under this relaxed notion of consensus, if Alice tries to cheat, it is possible that neither Bob nor Carol will accept Alice's transaction. Some honest agents might see one of the transactions first, while others might see the other first. Then the requirement of Honest-Termination does not apply, and the transactions might stay without a resolution forever. This turn of events can be seen as Alice losing her coin due to misbehaviour.

Otherwise Consensus and ABC Consensus do not differ. Agreed upon results are final, conflicting results are precluded and honest transactions are accepted quickly.

Surprisingly, despite the difference being so insignificant with respect to the functioning of a cryptocurrency, this relaxation allows ABC to combine a large set of advantages.

\section{Intuition}





For simplicity of presentation, we describe ABC in the terminology of a cryptocurrency. We refer to the cryptocurrency managed by the protocol as the money. A more formal description follows in Section \ref{sec:protocol}. 

\paragraph{\bf Transactions.}

As usual in cryptocurrencies, the main operation is a transaction, which transfers money from one or more inputs to one or more outputs. Inputs and outputs are money amounts paired with keys required to spend them. Every transaction refers to at least one previous transaction, such that all transactions form a directed acyclic graph (DAG).

\paragraph{\bf Validators.}

In proof-of-stake systems, the agents that own some of the money in the system also run the system. These agents are staying online and participating in validating transactions. In ABC's design, we do not require agents to stay online and participate, but allow agents to delegate this responsibility to other agents. All agents that stay online and validate transactions are called validators. Every agent can choose to be a validator. If an agent does not want to be validator, the agent can indicate (with an additional public key in a transaction) who should be the validator to whom the stake is delegated.
Validators stay online and sign correct transactions. A transaction is correct if the validator did not see another transaction spending the same input(s). The system works correctly as long as agents holding more than two-thirds of the system's money indicate honest validators.

\paragraph{\bf Confirmations.}

A transaction $t$ is confirmed by the system if enough validators ack (acknowledge by signing) $t$. An ack cannot be revoked. If a transaction receives enough acks, no other transaction conflicting with $t$ can become confirmed. In particular, if a cheating Alice attempts to issue a transaction $t'$ which is trying to spend (some of) the same input(s) as $t$, the system will never confirm $t'$. If Alice issues two conflicting transactions $t$ and $t'$ at roughly the same time, it is possible that (a) either $t$ or $t'$ gets confirmed (but not both), or (b) neither $t$ nor $t'$ are ever confirmed. Case (b) happens if some (but not enough) validators see and sign $t$, while others see and sign $t'$. The system might stay in this state forever with the validators' approval split between $t$ and $t'$, with no clear majority. Such a situation can only arise if Alice tried to cheat. The result is equivalent to the misbehaving agent losing the money she attempted to double-spend, and does not constitute any threat to the system.

It is somewhat intuitive to verify that such a system does work correctly if the validating power amounts are statically assigned to the validators, and a set of validators controlling more than two-thirds of the cryptocurrency obeys the protocol. In Section \ref{sec:double-spending} we will show that our system still works correctly when the agents can freely exchange the cryptocurrency and change the appointed validators, even in the harsh conditions of an asynchronous network. Thus, we establish a system with the participation model similar to proof-of-stake protocols, but much simpler than known proof-of-stake protocols.


\section{Model}

\paragraph{\bf Agents and Adversary.}

Our blockchain is used and maintained by its participants called agents. Agents who follow the protocol are called honest. The set of agents who do not follow the protocol is controlled by the adversary. The adversary behaves in an arbitrary (Byzantine) way.

Similarly to other proof-of-stake systems, we assume that at any time, the adversary owns less than one-third of the cryptocurrency present in the system. We introduce more concepts in order to state this requirement precisely in Section \ref{sec:adversary-stake}.

\paragraph{\bf Asynchronous Communication.}
All agents are connected by a virtual network similar to Bitcoin's, where agents can broadcast their messages to all other agents. Like in Bitcoin, new agents can join this network to receive new and prior messages. Agents can also leave the network.

The network is asynchronous: The adversary controls the network, dictating when messages are delivered and in what order. Messages are only required to reach the recipient {\it eventually}, without any bound on the time it might take. Under such weak network requirements, an adversary delaying the delivery of messages can delay the progress of an agent, but otherwise will not be able to interfere with the protocol or trick honest agents.

In the Appendix, we summarize why many of the protocols in Table \ref{tbl:comparison} cannot be considered asynchronous.

\paragraph{\bf Cryptographic Primitives.}

We assume the functionality of asymmetric encryption where a public key allows every agent to verify a signature of the associated secret key. Agents can freely generate public/secret key pairs.

We also assume cryptographic hashing, where for every message a succinct, unique hash can be computed. Whenever we mention references between transactions in our protocol, we mean hashes that uniquely identify the referenced data.

Apart from these two primitives, the ABC protocol is completely deterministic.

\section{Protocol}
\label{sec:protocol}

\subsection{Transactions}

\paragraph{\bf Outputs.}

Outputs are the basic unit of information. Outputs are included in transactions to identify money holders and corresponding money amounts. 

\begin{definition}[Output]
An output contains:
\begin{itemize}
    \item {\it Value:} A number representing the amount of money.
    \item {\it Owner key:} A public key. The agent holding the associated secret key is the owner of the money.
\end{itemize}
\end{definition}

In general, agents could reuse their keys for multiple outputs, but for simplicity of presentation we assume that the owner key always uniquely identifies a single output.

\paragraph{\bf Transactions.}

A transaction is a request issued by an agent (or a set of agents) to transfer money to other agent(s). Outputs of a transaction identify recipients of the transaction. The transaction also indicates a validator -- some agent devoted to maintaining the system.

For simplicity of presentation we assume that outputs uniquely identify the originating transaction.

\begin{definition}[Transaction]
A transaction $t$ contains:
\begin{itemize}
	\item A set of inputs, where each input is an output of some previous transaction. Transaction $t$ is said to spend these inputs.
    \item A set of outputs. The sum of values of the outputs equals the sum of values of the inputs.
    \item {\it Validator key:} A public key. The agent holding the associated secret key is indicated as the validator.
\end{itemize}

The sum of values of the outputs is called the value of $t$.
The transaction is signed by all secret keys associated with the inputs.
\end{definition}

\paragraph{\bf Genesis.}

The {\it genesis} is a special transaction without inputs (and hence without the associated signatures). The genesis is hard-coded in the protocol and known upfront to every agent. The genesis describes the initial distribution of money among the original agents and the initial validators (which could be the same as the original agents).

The values of all genesis outputs sum up to $M$, so $M$ is the total money in the system.

\tikzstyle{tx} = [draw=black, fill=gray!5, thick,
    rectangle, rounded corners, inner sep=4pt, inner ysep=4pt, text centered]
\tikzstyle{tx_conf} = [draw=black, fill=blue!20, thick,
    rectangle, rounded corners, inner sep=4pt, inner ysep=4pt, text centered]
\tikzstyle{tx_igno} = [draw=black, fill=gray!80, thick,
    rectangle, rounded corners, inner sep=4pt, inner ysep=4pt, text centered]

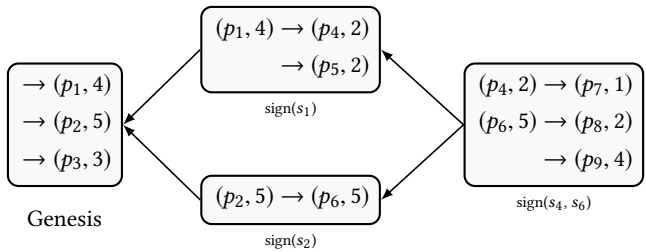
\begin{figure}
\centering
\begin{tikzpicture}

\node[tx](GENESIS) 
 {$\begin{aligned}
	\to (p_1, 4)\\
	\to (p_2, 5)\\
	\to (p_3, 3)
	\end{aligned}$
  };
 
\node[below of=GENESIS,
  yshift=-0.25cm,
  anchor=center]
  {Genesis};

\node[tx,
right of=GENESIS,
yshift=1cm,
xshift=2cm,
anchor=center,
label=below:{\scriptsize sign($s_1$)}
](T1) 
 {$\begin{aligned}
       (p_1, 4) &\to (p_4, 2)\\
       & \to (p_5, 2)
   \end{aligned}$};

\draw (T1.west) edge[-latex, semithick] (GENESIS.east);

\node[tx,
right of=GENESIS,
yshift=-1cm,
xshift=2cm,
anchor=center,
label=below:{\scriptsize sign($s_2$)}
](T2)
 {$\begin{aligned}
       (p_2, 5) \to (p_6, 5)
   \end{aligned}$};

\draw (T2.west) edge[-latex, semithick] (GENESIS.east);

\node[tx,
right of=T2,
yshift=1cm,
xshift=2.5cm,
anchor=center,
label=below:{\scriptsize sign($s_4$, $s_6$)}
](T3)
 {$\begin{aligned}
       (p_4, 2) &\to (p_7, 1)\\
       (p_6, 5) &\to (p_8, 2)\\
       &\to (p_9, 4)
   \end{aligned}$};

\draw (T3.west) edge[-latex, semithick] (T1.east);
\draw (T3.west) edge[-latex, semithick] (T2.east);

\end{tikzpicture}

\caption{Example DAG of transactions, validator keys are omitted. The $p_i$'s are owner keys, and $s_i$'s are the corresponding secret keys.}
\label{fig:ex1}
\end{figure}

\subsection{\bf Validators}
\label{sec:validators}

Intuitively, validators are agents devoted to maintaining the system. Validators listen for transactions being broadcast, and sign them if they are not misbehaving. After a transaction $t$ with a value of $m$ is processed by the system, the ``signing power'' of the validator $v$ indicated in transaction $t$ increases by $m$ (at the cost of the validators indicated in transactions that output the inputs of $t$). To spend an output of $t$, the owner of an output must later broadcast a new transaction, as $v$ does not control how the outputs of $t$ will be spent. An owner of an output of $t$ can change the appointed validator $v$ to any other validator by spending $t$'s output (for instance by self-sending the money), including a different validator key. Any agent can also indicate herself as the validator.

The validator $v$ signs transactions in the system to contribute to their confirmation, and the contribution is proportional to the amount of money delegated to $v$.

\paragraph{\bf Efficiency} In  ABC, we expect the number of validators to be naturally relatively small, such that a small number of validator's signatures will be enough to confirm a transaction. We think of a validator in ABC to be the equivalent of mining pools in Bitcoin. The set of validators in ABC will shift and change over time, similarly to Bitcoin mining pools.

Without mitigation, an excessive number of (relatively powerless) validators would require excessive numbers of signatures to be broadcast in the network. If one worries about the number of validators being too large, validators can be incentivized (or required) to form groups by the protocol. For example, the protocol might state that too small validators receive smaller fees from transaction confirmation. We will discuss this in more detail in Section \ref{sec:txfees}.

\subsection{Confirmations}

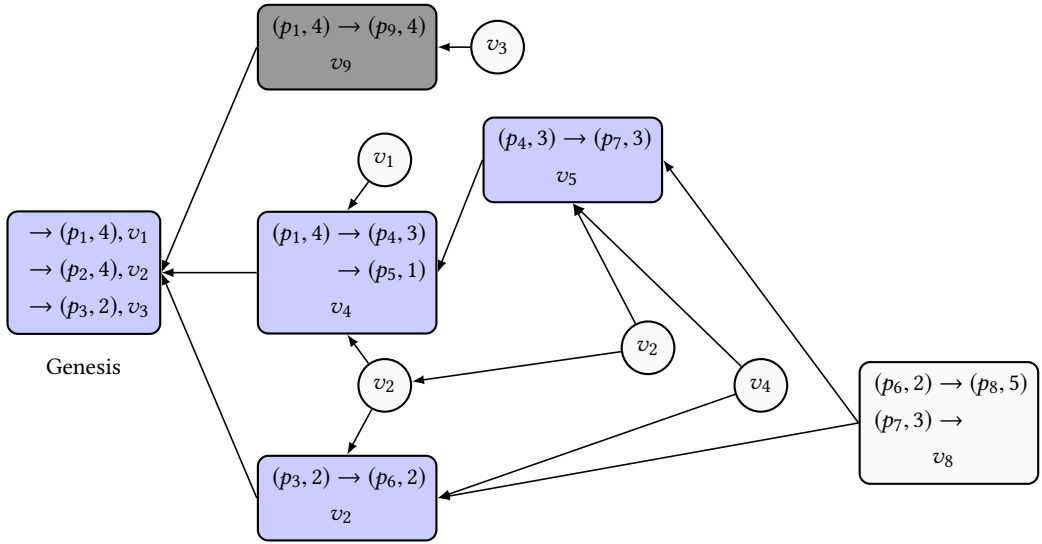
\begin{figure*}
\centering
\begin{tikzpicture}

\node[tx_conf](GENESIS)
 {$\begin{aligned}
   &\to (p_1, 4), v_1\\
   &\to (p_2, 4), v_2\\
   &\to (p_3, 2), v_3
   \end{aligned}$};
 
\node[below of=GENESIS,
  yshift=-0.25cm,
  anchor=center]
  {Genesis};

\node[tx_conf,
right of=GENESIS,
yshift=0cm,
xshift=2.5cm,
anchor=center
](T1)
 {$\begin{aligned}
       (p_1, 4) &\to (p_4, 3)\\
       & \to (p_5, 1)\\
       & v_4
   \end{aligned}$};

\draw (T1.west) edge[-latex, semithick] (GENESIS.east);

\node[tx_conf,
right of=T1,
yshift=1.5cm,
xshift=2cm,
anchor=center
](T5)
 {$\begin{aligned}
       (p_4, 3) &\to (p_7, 3)\\
       & v_5
   \end{aligned}$};

\draw (T5.west) edge[-latex, semithick] (T1.east);

\node[tx_conf,
right of=GENESIS,
yshift=-3cm,
xshift=2.5cm,
anchor=center
](T2)
 {$\begin{aligned}
       (p_3, 2) &\to (p_6, 2)\\
       & v_2
   \end{aligned}$};

\draw (T2.west) edge[-latex, semithick] (GENESIS.east);

\node[tx_igno,
right of=GENESIS,
yshift=3cm,
xshift=2.5cm,
anchor=center
](T4)
 {$\begin{aligned}
       (p_1, 4) &\to (p_9, 4)\\
       & v_9
   \end{aligned}$};

\draw (T4.west) edge[-latex, semithick] (GENESIS.east);

\node[tx,
right of=T5,
yshift=-3.5cm,
xshift=4cm,
anchor=center
](T3)
 {$\begin{aligned}
       (p_6, 2) &\to (p_8, 5)\\
       (p_7, 3) &\to\\
       & v_8
   \end{aligned}$};

\draw (T3.west) edge[-latex, semithick] (T5.east);
\draw (T3.west) edge[-latex, semithick] (T2.east);

\node[right of=GENESIS,
yshift=1.5cm,
xshift=3cm,
anchor=center,
draw=black, minimum size=7mm, inner sep=1pt, fill=gray!5, thick, circle](V1){$v_1$};

\draw (V1) edge[-latex, semithick] (T1.north);

\node[right of=GENESIS,
yshift=-1.5cm,
xshift=3cm,
anchor=center,
draw=black, minimum size=7mm, inner sep=1pt, fill=gray!5, thick, circle](V2){$v_2$};

\draw (V2) edge[-latex, semithick] (T1.south);
\draw (V2) edge[-latex, semithick] (T2.north);

\node[right of=V1,
yshift=-3cm,
xshift=4cm,
anchor=center,
draw=black, minimum size=7mm, inner sep=1pt, fill=gray!5, thick, circle](V4){$v_4$};

\draw (V4) edge[-latex, semithick] (T2.east);
\draw (V4) edge[-latex, semithick] (T5.south);

\node[right of=V2,
yshift=0.5cm,
xshift=2.5cm,
anchor=center,
draw=black, minimum size=7mm, inner sep=1pt, fill=gray!5, thick, circle](V3){$v_2$};

\draw (V3) edge[-latex, semithick] (T5.south);
\draw (V3) edge[-latex, semithick] (V2);

\node[right of=T4,
yshift=0cm,
xshift=1cm,
anchor=center,
draw=black, minimum size=7mm, inner sep=1pt, fill=gray!5, thick, circle](V33){$v_3$};

\draw (V33) edge[-latex, semithick] (T4.east);

\end{tikzpicture}

\caption{Example transaction DAG, $p_i$'s represent the owners and $v_i$'s the validators. Circle nodes are acks labelled by the issuing validators. Acks point to the transactions being signed or the previous acks of the same validator. Blue transactions are confirmed based on the acks. When issuing an ack, validators have to point to the previously issued ack, as exhibited by $v_2$. The gray transaction is an attempt at double-spending; it conflicts with a confirmed transaction and will never be confirmed.}

\label{fig:example_dag}
\end{figure*}

A validator broadcasts a message called an {\it ack} to communicate the new set of transactions it signed.

\begin{definition}[ack]
An ack contains:
\begin{itemize}
\item A reference to the previous ack issued by the same validator.
\item A set of references to transactions $t$ the validator signs.
\end{itemize}
The ack is signed by the validator's secret key.
\end{definition}

All messages can only reference previously created messages with hashes. Cyclic hash references are impossible and hence all messages form a directed acyclic graph (DAG), with the genesis being the only root. Messages are processed in any order respecting references. Agents do not process a transaction $t$ until $\past(t)$ is received in full.

\begin{definition}[past]\label{def:past}
The set of messages reachable by following references from $t$ is called $\past(t)$. For a set of messages $T$, $\past(T) = \bigcup_{t \in T} \past(t)$.
\end{definition}

Transactions can be confirmed by the system, and confirmation is permanent. A transaction $t$ becomes confirmed when enough validators broadcast an ack signing it. After a transaction is confirmed, the {\it stake delegated} to the validator indicated in $t$ increases by the value of $t$ (and appropriately decreases for the validators to whom the inputs were delegated). Thus we define transaction confirmation and the stake delegated to a validator inductively (from genesis) with respect to each other.

Genesis is confirmed from the start.

\begin{definition}[delegated stake]\label{def:stake}
Given a set of acks $A$, let $T_C$ be the set of transactions confirmed in $\past(A)$ that indicate $v$ as the validator. The stake delegated to $v$ in $\past(A)$ is equal to the sum of values of outputs of transactions in $T_C$ that are unspent in $\past(A)$.
\end{definition}

\begin{definition}[confirmed]\label{def:confirmed}
A transaction $t$ is confirmed if the transactions that output the inputs of $t$ are confirmed, and there exists a set of acks $A_t$ such that:
\begin{itemize}
\item some validators $v_1, \dots, v_k$ with respective delegated stake $m_1, \dots, m_k$ in $\past(A_t)$ sign $t$, and $\sum_{i=1}^k m_i > \frac{2}{3} M$;
\item no transaction $t' \in \past(A_t)$ shares an input with $t$.
\end{itemize}
\end{definition}

Honest agents do not spend their inputs more than once. Given some honest transaction $t$, all validators can sign $t$ and for no potential $A_t$ will there be a $t' \in \past(A_t)$ that shares an input with $t$. Then it is straightforward to collect validator acks for $t$ and show that $t$ is confirmed.

On the other hand, if some $t'$ sharing inputs with $t$ is present in the transaction DAG, it might be unclear if there is a set $A_t$ exhibiting that $t$ is confirmed. It is only the misbehaving agent's concern to find an appropriate $A_t$ and prove to the recipient of $t$ that $t$ is confirmed.

\begin{figure}
\centering
\begin{tikzpicture}

\node[tx_conf](GENESIS)
 {$\begin{aligned}
   &\to (p_1, 4), v_1\\
   &\to (p_2, 4), v_2\\
   &\to (p_3, 2), v_3
   \end{aligned}$};
 
\node[below of=GENESIS,
  yshift=-0.25cm,
  anchor=center]
  {Genesis};

\node[tx_conf,
right of=GENESIS,
yshift=1cm,
xshift=1.5cm,
anchor=center
](T1)
 {$\begin{aligned}
       (p_1, 4) &\to (p_4, 3)\\
       & \to (p_5, 1)\\
       & v_4
   \end{aligned}$};

\draw (T1.west) edge[-latex, semithick] (GENESIS.east);

\node[tx,
right of=T1,
yshift=0cm,
xshift=2cm,
anchor=center
](T5)
 {$\begin{aligned}
       (p_4, 3) &\to (p_7, 3)\\
       & v_5
   \end{aligned}$};

\draw (T5.west) edge[-latex, semithick] (T1.east);

\node[tx_conf,
right of=GENESIS,
yshift=-2cm,
xshift=1.5cm,
anchor=center
](T2)
 {$\begin{aligned}
       (p_3, 2) &\to (p_6, 2)\\
       & v_2
   \end{aligned}$};

\draw (T2.west) edge[-latex, semithick] (GENESIS.east);

\node[right of=GENESIS,
yshift=2.5cm,
xshift=2.5cm,
anchor=center,
draw=black, minimum size=7mm, inner sep=1pt, fill=gray!5, thick, circle](V1){$v_1$};

\draw (V1) edge[-latex, semithick] (T1.north);

\node[right of=GENESIS,
yshift=-0.75cm,
xshift=2.5cm,
anchor=center,
draw=black, minimum size=7mm, inner sep=1pt, fill=gray!5, thick, circle](V2){$v_2$};

\draw (V2) edge[-latex, semithick] (T1.south);
\draw (V2) edge[-latex, semithick] (T2.north);

\node[right of=V2,
yshift=0cm,
xshift=2cm,
anchor=center,
draw=black, minimum size=7mm, inner sep=1pt, fill=gray!5, thick, circle](V4){$v_4$};

\draw (V4) edge[-latex, semithick] (T2.east);
\draw (V4) edge[-latex, semithick] (T5.south);

\node[below of=T1,
  yshift=-0.25cm,
  anchor=center]
  {$t_1$};
  
\node[below of=T2,
  yshift=0cm,
  anchor=center]
  {$t_2$};

\end{tikzpicture}
\caption{A subview of the transaction DAG from Figure \ref{fig:example_dag}. The set $A_{t_1}$ consisting of the acks of validators $v_1$ and $v_2$ is a proof that $t_1$ is confirmed. The set $A_{t_2}$ consisting of the acks of validators $v_1$, $v_2$ and $v_4$ is a proof that $t_2$ is confirmed.}
\end{figure}
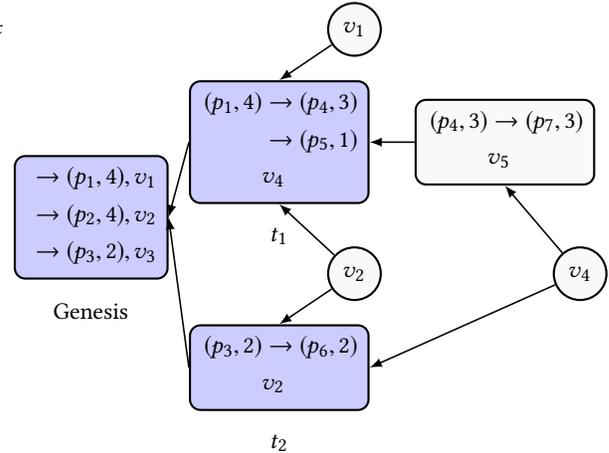

\begin{figure}
\centering
\begin{tikzpicture}

\node[tx_conf](GENESIS)
 {$\begin{aligned}
   &\to (p_1, 3), v_1\\
   &\to (p_2, 2), v_2\\
   &\to (p_3, 2), v_3\\
   &\to (p_4, 2), v_4
   \end{aligned}$};
 
\node[below of=GENESIS,
  yshift=-0.25cm,
  anchor=center]
  {Genesis};

\node[tx_conf,
right of=GENESIS,
yshift=0cm,
xshift=2.5cm,
anchor=center
](T1)
 {$\begin{aligned}
       (p_2, 2) &\to (p_5, 2)\\
       & v_2
   \end{aligned}$};

\draw (T1.west) edge[-latex, semithick] (GENESIS.east);

\node[tx_igno,
right of=GENESIS,
yshift=-3cm,
xshift=2.5cm,
anchor=center
](T2)
 {$\begin{aligned}
       (p_4, 2) &\to (p_7, 2)\\
       & v_5
   \end{aligned}$};

\draw (T2.west) edge[-latex, semithick] (GENESIS.east);

\node[tx_igno,
right of=GENESIS,
yshift=3cm,
xshift=2.5cm,
anchor=center
](T4)
 {$\begin{aligned}
       (p_4, 2) &\to (p_6, 2)\\
       & v_5
   \end{aligned}$};

\draw (T4.west) edge[-latex, semithick] (GENESIS.east);

\node[right of=GENESIS,
yshift=1.5cm,
xshift=3cm,
anchor=center,
draw=black, minimum size=7mm, inner sep=1pt, fill=gray!5, thick, circle](V1){$v_1$};

\draw (V1) edge[-latex, semithick] (T1.north);
\draw (V1) edge[-latex, semithick] (T4.south);

\node[right of=GENESIS,
yshift=-1.5cm,
xshift=3cm,
anchor=center,
draw=black, minimum size=7mm, inner sep=1pt, fill=gray!5, thick, circle](V2){$v_2$};

\draw (V2) edge[-latex, semithick] (T1.south);
\draw (V2) edge[-latex, semithick] (T2.north);

\node[right of=GENESIS,
yshift=-1.5cm,
xshift=5cm,
anchor=center,
draw=black, minimum size=7mm, inner sep=1pt, fill=gray!5, thick, circle](V3){$v_3$};

\draw (V3) edge[-latex, semithick] (T1.south);
\draw (V3) edge[-latex, semithick] (T2.north);

\node[right of=T4,
yshift=0cm,
xshift=1cm,
anchor=center,
draw=black, minimum size=7mm, inner sep=1pt, fill=gray!5, thick, circle](V4){$v_4$};

\draw (V33) edge[-latex, semithick] (T4.east);

\node[right of=T2,
yshift=0cm,
xshift=1cm,
anchor=center,
draw=black, minimum size=7mm, inner sep=1pt, fill=gray!5, thick, circle](V42){$v_4$};

\draw (V42) edge[-latex, semithick] (T2.east);

\end{tikzpicture}

\caption{Example attempt at double-spending. The validator $v_4$ is adversarial, does not reference previous acks in new acks and attempts to confirm conflicting transactions. Honest validators are split between conflicting transactions such that neither will ever be confirmed.}
\end{figure}
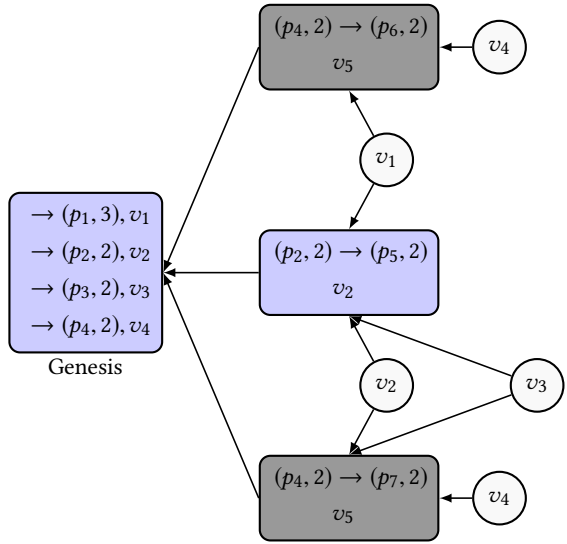

\subsection{Adversary}\label{sec:adversary-stake}

The adversary behaves in an arbitrary way, and thus might create conflicting transactions, acks that do not reference previously issued acks, send different messages to different recipients, etc.

Any message sent by an honest agent is immediately seen by the adversary. The delivery of each message from an honest agent to an honest agent can be delayed by the adversary for an arbitrary amount of time.

\paragraph{\bf Stake.}

We make a standard assumption pertaining to proof-of-stake systems that the adversary does never control more than one-third of the stake. The assumption is the equivalent of assuming that the adversary does not control more than one-third of the permissions in a BFT protocol, or half of the hashing power in a proof-of-work system such as Bitcoin. An agent owning a large stake in a system is heavily invested in the system. While this agent attacking his own system is certainly feasible with deep pockets, it is also self-destructive. So far, larger cryptocurrencies such as Bitcoin have not seen such self-attacks. 

More formally, the value of genesis outputs delegated to the adversary sums up to less than $M/3$. In every transaction, a new validator is indicated, and the stake delegated to the adversary might shift over time.

\begin{definition}[adversary stake]\label{def:valueshift}
Let $m_t$ be the value of a transaction $t$, and $m_{t,A}$ be the sum of values of inputs of $t$ that are outputs of transactions delegated to the adversary. Consequently $m-m_A$ is the sum of values of inputs of $t$ that are outputs of transactions delegated to honest agents.

If a transaction $t$ is delegated to an honest agent, then we subtract $m_{t,A}$ from the amount we count as delegated to the adversary when $t$ is {\em confirmed} (i.e. when some $A_t$ exists).

If transaction $t$ is delegates to the adversary, then we add $m_t - m_{t,A}$ to the amount we count as delegated to the adversary when $t$ is {\em issued}.
\end{definition}

In Section \ref{sec:double-spending} we show that with these assumptions, no conflicting transactions will be confirmed. 

\section{Correctness}\label{sec:double-spending}

This section is devoted to proving that the presented protocol upholds ABC Consensus as defined in Section \ref{sec:relaxing-consensus}. 


Under our assumption from Section \ref{sec:adversary-stake}, more than two-thirds of the money is always delegated to honest validators. Hence, if there is no double-spend alternative to a transaction $t$, honest validators will sign $t$ and $t$ will be confirmed by the system. Thus Validity and Honest-Termination of Definition \ref{def:abcconsensus} hold. Whenever any agent observes a transaction $t$ as confirmed, the acks $A_t$ serve as the proof that $t$ is confirmed to any other agent. Therefore, to show that Agreement holds, it suffices to show that no pair of conflicting transactions is ever confirmed.

\begin{corollary}
If no conflicting transactions are confirmed, ABC protocol satisfies ABC consensus.
\end{corollary}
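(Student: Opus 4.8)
The plan is to check each of the three clauses of ABC Consensus (Definition \ref{def:abcconsensus}) separately, and to observe that the corollary's hypothesis discharges exactly the clause carrying the real difficulty. Since the genuinely hard statement --- that no two conflicting transactions are ever confirmed --- is assumed here (and established independently in this section), the corollary is essentially a packaging result, so I would keep the argument short and structural, relying throughout on the fact that an agent ``accepts'' a transaction precisely when it observes it as confirmed in the sense of Definition \ref{def:confirmed}.

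First I would handle Validity and Honest-Termination together, since both apply only when every honest agent sees the same transaction $t$ and no conflicting transaction exists. Under the stake assumption of Section \ref{sec:adversary-stake}, validators holding more than $\frac{2}{3}M$ of the delegated stake are honest, and each such validator, having seen no spend conflicting with $t$, will ack $t$. Collecting these acks yields a set $A_t$ whose signers have delegated stake exceeding $\frac{2}{3}M$ in $\past(A_t)$ and which contains no conflicting $t'$; by Definition \ref{def:confirmed} this makes $t$ confirmed, so honest agents accept it. When the network delivers messages promptly, these acks arrive promptly, giving fast termination.

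Next I would treat Agreement, which splits into two halves. The clause ``no two conflicting transactions are accepted'' is literally the hypothesis of the corollary, so nothing further is needed. The remaining half --- if one honest agent accepts $t$, then every honest agent does --- I would derive from the locality and monotonicity of the predicate in Definition \ref{def:confirmed}. An honest agent accepts $t$ only by exhibiting a witness set $A_t$; by Definition \ref{def:stake} the delegated stakes, and hence the confirmation status of $t$, depend only on the messages in $\past(A_t)$, a finite, reference-closed collection. Because acks are irrevocable and $\past(A_t)$ is fixed once formed, $A_t$ remains a valid certificate forever. By eventual delivery in the asynchronous model, every honest agent eventually receives all of $\past(A_t)$ and so recognizes $t$ as confirmed.

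I expect the only step needing care to be this portability argument, and specifically its interaction with the inductive clause of Definition \ref{def:confirmed} requiring the transactions that output $t$'s inputs to be confirmed. I would discharge this by induction along the DAG: since $A_t$ references everything used to confirm $t$, the set $\past(A_t)$ already contains witnesses for each ancestor confirmation, so the self-contained certificate $\past(A_t)$ transfers the entire inductive chain at once. The substantive obstacle --- ruling out double-confirmation --- is deliberately outside the scope of this corollary and is handled by the main argument of Section \ref{sec:double-spending}.
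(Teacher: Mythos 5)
Your proposal is correct and follows essentially the same route as the paper: the paper likewise disposes of Validity and Honest-Termination by noting that honest validators holding more than $\frac{2}{3}M$ of delegated stake will sign an unconflicted transaction, and reduces Agreement to the no-double-confirmation hypothesis by observing that the ack set $A_t$ is a portable, self-contained certificate convincing any other agent. Your extra care about monotonicity, irrevocability, and the inductive ancestor clause of Definition \ref{def:confirmed} merely spells out what the paper asserts in one line, so there is no substantive difference.
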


We now focus on proving that if our assumptions are met, it is impossible that any two conflicting transactions are confirmed, also known as the problem of double-spending.

Every transaction $t$ depends on transactions that happened before $t$ in order for $t$ to be possible.

\begin{definition}[Depends]\label{def:depends}
If a transaction $t'$ spends one or more outputs of transaction $t$, then $t'$ depends on $t$. Dependence is transitive and reflexive, i.e. every transaction depends on itself and if $t_2$ depends on $t_1$ and $t_3$ depends on $t_2$, then $t_3$ depends on $t_1$.
\end{definition}

Any two transactions that together would produce an inconsistent state of the system, such as double-spend transactions, are said to conflict.

\begin{definition}[Conflicts]\label{def:conflicts}
If two transactions $t_1$ and $t_2$ spend the same output, they conflict. Moreover, for two conflicting transactions $t_1$, $t_2$, every transaction that depends on $t_1$ conflicts with every transaction that depends on $t_2$.
\end{definition}

\paragraph{\bf Proof Outline}

For contradiction, assume that some transaction DAG can be produced by the protocol where two conflicting transactions $t_x$ and $t_y$ are confirmed. Consider the instance of such a DAG $G$ that is minimal in terms of the number of transactions.

Although ABC is completely asynchronous, we consider some ordering of messages that represents ``time''. Every message in ABC is ordered such that events in the DAG respect the time, i.e., if an event $t'$ points to an event $t$ in the DAG, the time of $t$ must be smaller than the time of $t'$.

Consider the first transaction $t_0$ that becomes confirmed in DAG $G$ during the protocol's execution. In Lemma \ref{lem:t_in_past_u} we show that for any other confirmed transaction $t$, either $t_0 \in \past(A_t)$ or $t \in \past(A_{t_0})$ holds in DAG $G$ (illustrated in Figure \ref{fig:prop1}). We conclude in Corollary \ref{cor:no_conflict} that $t_0$ cannot conflict with any transaction. In Lemma \ref{lem:acks_the_same} we note that $t_0$ does not serve a purpose for the construction of DAG $G$, as $t_0$'s inputs could be replaced in genesis with $t_0$'s outputs for a smaller DAG. This contradicts with out choice of $G$, and Theorem \ref{thm:no_conficting_confirmed} summarizes that under our assumptions, conflicting transactions cannot be confirmed in a single DAG.

\begin{lemma}\label{lem:depends_confirmed}
If a confirmed transaction $t_2$ depends on $t_1$, $t_1$ is confirmed.
\end{lemma}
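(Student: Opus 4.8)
The plan is to leverage the recursive structure already embedded in Definition~\ref{def:confirmed}: its very first clause states that a transaction $t$ can only be confirmed if \emph{the transactions that output the inputs of $t$ are confirmed}. These transactions outputting the inputs of $t$ are precisely the transactions on which $t$ \emph{directly} depends, i.e.\ the immediate predecessors obtained from the first sentence of Definition~\ref{def:depends} before taking the transitive closure. Thus the confirmed property automatically propagates one step backwards along the dependence relation, and the whole lemma follows by unfolding the transitive closure into a finite chain and inducting.

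Concretely, if $t_2 = t_1$ the claim is immediate by reflexivity. Otherwise, since dependence is the reflexive--transitive closure of direct spending, there is a finite chain $t_1 = u_0, u_1, \dots, u_n = t_2$ with $n \ge 1$ in which each $u_{i+1}$ spends an output of $u_i$; finiteness is guaranteed because the message DAG is acyclic with the genesis as its only root. I would then induct on $n$. For the base case $n = 1$, the transaction $t_1$ outputs one of the inputs of $t_2$, so it belongs to the set of transactions appearing in the first clause of Definition~\ref{def:confirmed}; since $t_2$ is confirmed, $t_1$ is confirmed. For the inductive step, the same base-case argument applied to $u_{n-1}$ (a direct predecessor of the confirmed $t_2$) shows $u_{n-1}$ is confirmed; as $u_{n-1}$ depends on $t_1$ through the shorter chain $u_0, \dots, u_{n-1}$, the induction hypothesis yields that $t_1$ is confirmed.

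I do not expect a genuine obstacle here, since the statement is essentially a restatement of the inductive clause of the confirmation definition. The only point demanding care is aligning vocabulary: one must verify that ``the transactions that output the inputs of $t$'' in Definition~\ref{def:confirmed} coincide exactly with the direct-dependence predecessors of $t$ from Definition~\ref{def:depends}, and that the reflexive--transitive dependence can always be realized by a finite spending chain so that the induction is well founded. Both follow directly from acyclicity of the DAG, so the argument remains short.
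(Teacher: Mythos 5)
Your proof is correct and takes exactly the same route as the paper, which simply states that the lemma ``follows directly from Definitions~\ref{def:confirmed} and \ref{def:depends} by induction.'' You have merely unfolded that one-line argument into its explicit form: the first clause of Definition~\ref{def:confirmed} propagates confirmation one step backwards along direct spending, and induction along the finite spending chain (well founded by acyclicity of the DAG) completes the claim.
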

\begin{proof}
Follows directly from Definitions \ref{def:confirmed} and \ref{def:depends} by induction.
\end{proof}

\begin{lemma}\label{lem:no_unconfirmed_in_G}
There are no unconfirmed transactions in $G$.
\end{lemma}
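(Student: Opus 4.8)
The plan is to argue by contradiction against the minimality of $G$: if $G$ contained an unconfirmed transaction, I would delete it to obtain a strictly smaller DAG $G'$ that is still producible by the protocol and in which $t_x$ and $t_y$ remain confirmed, contradicting the choice of $G$.

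First I would fix an arbitrary unconfirmed transaction $t$ in $G$. By the contrapositive of Lemma~\ref{lem:depends_confirmed}, every transaction that depends on $t$ is unconfirmed as well, so in particular neither $t_x$, $t_y$, nor any transaction whose outputs are inputs of $t_x$ or $t_y$ depends on $t$. Consequently no confirmed transaction spends an output of $t$, and the dependency structure used to confirm $t_x$ and $t_y$ never touches $t$. The only remaining way $t$ enters $G$ is as a transaction signed by some acks, i.e. with $t \in \past(A)$ for various ack sets $A$.

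I would then build $G'$ by deleting $t$ together with every transaction depending on $t$ (all unconfirmed by the step above, since their inputs are outputs of deleted transactions), and by rewriting each ack that signed a deleted transaction so that the reference is simply omitted. Such a rewritten ack is still a legal protocol message: it keeps its reference to the previous ack of the same validator and merely signs a subset of what it signed before, which is exactly what that validator would produce in an execution where $t$ was never created. I would check that $G'$ is producible under the standing assumptions, in particular that honest validators' behaviour stays consistent (removing transactions only removes conflicts and signed transactions, never forcing an honest validator to contradict itself) and that the adversary-stake bound of Definition~\ref{def:valueshift} is not violated (deleting an unconfirmed transaction never increases the share counted toward the adversary).

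The crux is verifying that $t_x$ and $t_y$ are still confirmed in $G'$ per Definition~\ref{def:confirmed}. For this I would observe that deleting the unconfirmed transactions affects the delegated-stake computation of Definition~\ref{def:stake} only by \emph{un-spending} outputs: since the deleted transactions are unconfirmed they never belonged to $T_C$, so no confirmed output is ever removed from the count, while an output that was spent only by a deleted transaction now becomes unspent. Hence the stake delegated to each validator in any relevant $\past(A)$ can only weakly increase, so the threshold $\sum_i m_i > \tfrac{2}{3}M$ still holds; and since removing transactions cannot create a new input-sharing conflict, the condition that no $t' \in \past(A_t)$ shares an input with the confirmed transaction is preserved. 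Thus $t_x$ and $t_y$ remain confirmed in the strictly smaller $G'$, contradicting minimality. I expect the main obstacle to be the bookkeeping of this last step: making the ack rewriting precise and confirming that the stake accounting — both the delegated stake of Definition~\ref{def:stake} and the adversary share of Definition~\ref{def:valueshift} — is genuinely preserved rather than merely plausibly so.
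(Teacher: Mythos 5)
Your proposal is correct and takes essentially the same route as the paper's own proof: remove the unconfirmed transaction together with everything depending on it (all unconfirmed, by Lemma~\ref{lem:depends_confirmed}), drop the corresponding ack references, and note that delegated stake under Definition~\ref{def:stake} can only weakly increase (deleted transactions never belonged to $T_C$, and deletion only un-spends outputs), so $t_x$ and $t_y$ remain confirmed in a strictly smaller DAG, contradicting minimality. Your extra bookkeeping about ack legality, absence of new conflicts, and the adversary-stake bound is more explicit than the paper's two-line argument, but it is the same proof.
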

\begin{proof}
Suppose some unconfirmed transaction $t_u$ exists in $G$. Since $t_u$ is unconfirmed, by Lemma \ref{lem:depends_confirmed} no confirmed transaction depends on $t_u$.

Let $A$ be some set of acks in DAG $G$. Consider the DAG $G'$ obtained by removing $t_u$ together with transactions that depend on $t_u$, and removing references from acks to $t_u$ (and dependent transactions). Let $A'$ be the set of acks in $G'$ corresponding to $A$. By Definition \ref{def:stake}, for any validator $v$, the stake delegated to $v$ in $A'$ is no less than in $A$. Hence, by Definition \ref{def:confirmed}, any transaction $t$ confirmed in $G$ remains confirmed in $G'$. In particular, $t_x$ and $t_y$ are confirmed in $G'$. However, $G'$ contains less transactions than $G$, a contradiction with $G$ being minimal.
\end{proof}

We argue about the confirmed transactions with respect to some total order of messages in ``time'' that respects message dependence. Of course, between different executions producing the same transaction DAG, some transactions might be confirmed in different orders (or at the same time); we fix one arbitrary possible order of messages $<_{time}$ and choose $t_0$ as the first transaction confirmed in $G$ during the execution, i.e. such that there is a set $A_{t_0}$ where $\max_{<_{time}} A_{t_0}$ is minimal.

\begin{lemma}\label{lem:t_in_past_u}
Let $t$ be some confirmed transaction. Then, either $t_0 \in \past(A_t)$ or $t \in \past(A_{t_0})$ holds in $G$.
\end{lemma}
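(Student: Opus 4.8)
The plan is to argue by contradiction: assume that in $G$ neither $t_0 \in \past(A_t)$ nor $t \in \past(A_{t_0})$ holds, and derive a contradiction. The whole argument rests on one structural observation about a single validator, together with a Byzantine quorum-intersection count over the signers of $t_0$ and $t$.

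First I would record the decisive per-validator fact. Suppose some validator $v$ signs $t_0$ with an ack $a_0 \in A_{t_0}$ and also signs $t$ with an ack $a \in A_t$. If $v$ is honest, then every ack of $v$ references the previous ack of $v$, so all of $v$'s acks lie on a single chain and are totally ordered by reachability in the DAG; hence either $a_0 \in \past(a)$ or $a \in \past(a_0)$. In the first case $t_0 \in \past(a_0) \subseteq \past(a) \subseteq \past(A_t)$, and in the second $t \in \past(a) \subseteq \past(a_0) \subseteq \past(A_{t_0})$, so either way one of the two desired containments holds, contradicting our assumption. Thus it suffices to exhibit a single honest validator that signs both $t_0$ and $t$.

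Second, I would produce such a validator by counting stake. By Definition \ref{def:confirmed} the signers of $t_0$ carry delegated stake $> \tfrac{2}{3}M$ and the signers of $t$ carry delegated stake $> \tfrac{2}{3}M$. Since total money is $M$, two quorums of size exceeding $\tfrac{2}{3}M$ overlap in stake exceeding $\tfrac{1}{3}M$; as the adversary controls $< \tfrac{1}{3}M$ (Section \ref{sec:adversary-stake}, Definition \ref{def:valueshift}), the overlap must contain at least one honest validator, which by the previous paragraph completes the proof.

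The main obstacle is that the two ``$> \tfrac{2}{3}M$'' lower bounds are taken in different reference frames: by Definition \ref{def:stake} a validator's delegated stake is defined relative to a given ack set, so the signers of $t_0$ are heavy in $\past(A_{t_0})$ while the signers of $t$ are heavy in $\past(A_t)$, and the naive overlap count is literally valid only within a single frame. To repair this I would exploit the choice of $t_0$ as the \emph{first} confirmed transaction: since confirming $t_0$ requires the transactions outputting its inputs to be already confirmed (Definition \ref{def:confirmed}) and no transaction is confirmed before $t_0$, those inputs must be genesis outputs, so $\past(A_{t_0})$ is a near-genesis frame in which the only confirmed transactions are genesis and $t_0$. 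In such a frame the confirmed transactions do not conflict, so by conservation of value the delegated stakes sum to exactly $M$, the adversary's share stays $< \tfrac{1}{3}M$ by Definition \ref{def:valueshift}, and the standard intersection bound applies cleanly. The remaining work is to transport the stake of $t$'s signers into this common frame (or, symmetrically, to argue the honest overlap must already be visible there), and handling this frame change via the issuance-versus-confirmation accounting of Definition \ref{def:valueshift} is where I expect the real technical effort; by contrast the ack-chain step is short and is the part that actually converts a shared honest signer into one of the two containments.
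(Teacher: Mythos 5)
Your ack-chaining step is correct (and the paper relies on the same property), and your diagnosis of the frame mismatch is exactly the right worry --- but the proposal stops precisely where the lemma's content begins, and the repair you sketch aims at the wrong target. The paper never exhibits a common honest signer and never transports the stake of $t$'s signers into a common frame. Instead it derives a stake \emph{deficit} for $t$. Let $V_h$ be the honest validators whose acks in $A_{t_0}$ account for more than $M/3$ of genesis stake (this part matches your plan, since $t_0$ being first confirmed forces $A_{t_0}$ into the genesis frame). Under the contradiction hypothesis, chaining gives two exclusions: since $t \notin \past(A_{t_0})$, the validators $V_h$ did not sign $t$ before signing $t_0$; and since $t_0 \notin \past(A_t)$, no ack issued by $V_h$ at or after their $t_0$-ack can appear in $\past(A_t)$ (it would drag $t_0$ in). Moreover, by Definition \ref{def:stake} the genesis outputs of $V_h$ counted toward confirming $t_0$ are unspent in $\past(A_{t_0})$, so $V_h$ also never signed, before $t_0$, any transaction spending those outputs. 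Consequently, inside the sub-DAG $\past(A_t)$ there is no $V_h$ signature on $t$, nor on any transaction spending $V_h$'s delegated genesis outputs; by induction over the confirmation order, no such spending transaction is ever confirmed in $\past(A_t)$, so more than $M/3$ of the total stake is permanently out of reach of any quorum formed there. Hence the signers of $t$ command less than $2M/3$ in $\past(A_t)$, contradicting the confirmedness of $t$ --- the contradiction is a deficit in $t$'s quorum, not an intersection of two quorums.

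This distinction is why your remaining step cannot be filled in as stated: under the hypothesis, the honest signers of $t_0$ provably do \emph{not} sign $t$ anywhere in $\past(A_t)$, and the stake confirming $t$ may sit on validator keys holding no genesis stake at all, so no bookkeeping via Definition \ref{def:valueshift} will force the two quorums to share an honest member. To make your ``transport'' rigorous you would have to show that the $V_h$ stake cannot be re-delegated within $\past(A_t)$ --- which is exactly the frozen-stake induction above, and once you have it, the deficit contradiction is immediate and the common-signer framing becomes superfluous. So the proposal's first half coincides with the paper, but the step you defer as ``the real technical effort'' is the lemma itself, and it is missing.
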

\begin{proof}
Suppose for contradiction for some $t$ neither $t_0 \in \past(A_t)$ nor $t \in \past(A_{t_0})$ hold in $G$.
Since $t_0$ is the first transaction confirmed during the execution of the protocol, by Definitions \ref{def:stake} and \ref{def:confirmed}, the first possible $A_{t_0}$ contains only acks of validators specified in genesis. The value of genesis outputs delegated to the adversary can only amount to less than $M/3$, so $A_{t_0}$ has to contain honest acks $A_{h,t_0} \subseteq A_{t_0}$ of value larger than $M/3$ (where the genesis outputs associated with $A_{h,t_0}$ are not spent in $\past(A_{t_0})$, by Definition \ref{def:stake}). Let $V_h$ be the (honest) validators that issued acks in $A_{h,t_0}$. Honest validators reference subsequent acks, so since $t \notin \past(A_{t_0})$, $V_h$ have not signed $t$ before signing $t_0$.

We observe:
\begin{itemize}
    \item none of acks in $A_{h,t_0}$ are in $\past(A_t)$;
    \item $V_h$ have not signed $t$ before signing $t_0$;
    \item the genesis outputs delegated to $V_h$ are not spent in $\past(A_{h,t_0})$;
    \item more than $1/3 M$ is delegated to $V_h$ in genesis.
\end{itemize}
Hence, in $\past(A_t)$, any transaction spending the outputs delegated in genesis to $V_h$ can be signed by validators with less than $2/3 M$ stake delegated, and similarly, $t$ can only be signed by validators with less than $2/3 M$ stake delegated in $A_t$, a contradiction.
\end{proof}

\begin{corollary}\label{cor:no_conflict}
There is no transaction conflicting with $t_0$ in $G$.
\end{corollary}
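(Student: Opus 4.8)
The plan is to argue by contradiction, combining the two preceding results with Definition \ref{def:confirmed}. Suppose some transaction $t$ conflicts with $t_0$ in $G$. By Lemma \ref{lem:no_unconfirmed_in_G} every transaction of $G$ is confirmed, so $t$ is confirmed, and Lemma \ref{lem:t_in_past_u} then applies to $t$ and to each of its dependencies (which are confirmed by Lemma \ref{lem:depends_confirmed}). The goal is to turn the conflict between $t$ and $t_0$ into a single pair of \emph{directly} conflicting confirmed transactions, and then contradict the second clause of Definition \ref{def:confirmed}, namely that no transaction in the past of a confirming ack set shares an input with the confirmed transaction itself.

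First I would unwind Definition \ref{def:conflicts} to its base case: the conflict between $t_0$ and $t$ must be witnessed by two transactions $t_a$ and $t_b$ that spend a common output, with $t_0$ depending on $t_a$ and $t$ depending on $t_b$. By Lemma \ref{lem:depends_confirmed}, both $t_a$ and $t_b$ are confirmed. The key reduction is to show that the witness on the $t_0$ side is $t_0$ itself, i.e. $t_a = t_0$. Since $t_0$ is the first confirmed transaction, at the moment $A_{t_0}$ completes no non-genesis transaction is yet confirmed, so the only confirmed transaction $t_0$ can depend on is the genesis — this is exactly the situation exploited in the proof of Lemma \ref{lem:t_in_past_u}, where $A_{t_0}$ consists solely of genesis validators' acks. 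As the genesis has no inputs, it cannot be the transaction sharing an input with $t_b$, so the shared output must be an input of $t_0$, forcing $t_a = t_0$. Hence $t_0$ directly shares an input with the confirmed transaction $t_b$.

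With this reduction in hand, I would finish by applying Lemma \ref{lem:t_in_past_u} to $t_b$: either $t_b \in \past(A_{t_0})$ or $t_0 \in \past(A_{t_b})$. In the first case, $t_b$ is a transaction in $\past(A_{t_0})$ sharing an input with $t_0$, contradicting the second bullet of Definition \ref{def:confirmed} applied to $t_0$; in the second case, $t_0$ is a transaction in $\past(A_{t_b})$ sharing an input with $t_b$, contradicting the same clause applied to $t_b$. Either way we reach a contradiction, so no transaction conflicts with $t_0$.

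I expect the main obstacle to be the reduction $t_a = t_0$, that is, justifying that the conflict attaches to $t_0$ directly rather than to some strict confirmed ancestor of $t_0$. Definition \ref{def:confirmed} only forbids a shared input with the confirmed transaction itself, and not shared inputs among arbitrary members of a past set, so without pinning the base conflict to $t_0$ the final contradiction does not close. This step rests entirely on the ``first confirmed'' characterization of $t_0$ and on the genesis being confirmed from the start, and it is precisely where the time order $<_{time}$ and the minimality of $\max_{<time} A_{t_0}$ do the real work.
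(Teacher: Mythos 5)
Your high-level route is the same one the paper intends --- the paper's entire proof of this corollary is ``Follows from Lemmas \ref{lem:no_unconfirmed_in_G} and \ref{lem:t_in_past_u}'' --- and you correctly identify the detail this one-liner glosses over: under Definition \ref{def:conflicts} a conflict with $t_0$ is witnessed by transactions $t_a$, $t_b$ sharing an input, with $t_0$ depending on $t_a$ and $t$ depending on $t_b$, whereas the second clause of Definition \ref{def:confirmed} only forbids sharing an input with the confirmed transaction \emph{itself}. Note also that one of your two closing cases needs no reduction at all: if $t_0 \in \past(A_{t_b})$, then $t_a \in \past(t_0) \subseteq \past(A_{t_b})$ and $t_a$ shares an input with $t_b$, which already contradicts that clause applied to $t_b$, whether or not $t_a = t_0$.

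The genuine gap is exactly where you yourself placed it: the reduction $t_a = t_0$, and as justified it does not hold up. Definition \ref{def:confirmed} requires the transactions providing $t_0$'s inputs to be confirmed \emph{somewhere in $G$}; it imposes no constraint that their confirming ack sets complete before $A_{t_0}$, and the paper selects $t_0$ only by minimizing $\max_{<_{time}} A_{t_0}$ over confirmed transactions. So ``at the moment $A_{t_0}$ completes no non-genesis transaction is yet confirmed'' does not yield ``$t_0$ depends only on genesis.'' Concretely: let $t_1$ spend a small genesis output, let $t_0$ spend $t_1$'s output, let the honest validators sign both in the same acks, and let an adversarial validator (whose stake is needed to exceed the two-thirds threshold) ack $t_0$ immediately but withhold its ack on $t_1$ until later. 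Then both are confirmed in $G$, but the earliest ack set confirming $t_0$ completes strictly before the earliest one confirming $t_1$, so the selection rule picks this $t_0$ as ``first'' even though its parent is not genesis; a tie (a single ack confirming parent and child simultaneously, with ties broken arbitrarily) causes the same problem under a temporal reading. Hence in the remaining case $t_b \in \past(A_{t_0})$ with $t_a \neq t_0$ you are left with two transactions in $\past(A_{t_0})$ that share an input, a situation no clause of Definition \ref{def:confirmed} forbids, and the contradiction does not close. Repairing this requires either a stronger choice of $t_0$ (define confirmation time recursively so that no transaction counts as confirmed before its ancestors, and among the earliest-confirmed transactions pick one minimal with respect to dependence), or rerunning the stake-counting argument of Lemma \ref{lem:t_in_past_u} with $t_a$ in place of $t_0$ --- which is not free, since that argument leans on $t_0$ being first. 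Neither appears in your proposal.
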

\begin{proof}
Follows from Lemmas \ref{lem:no_unconfirmed_in_G} and \ref{lem:t_in_past_u}.
\end{proof}

\begin{lemma}\label{lem:acks_the_same}
There is a DAG $G'$ where in genesis the inputs of $t_0$ are replaced with the outputs of $t_0$, and the set of confirmed transactions is the same as in $G$, except not including $t_0$.
\end{lemma}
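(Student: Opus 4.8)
The plan is to build $G'$ explicitly and then show, through a stake-invariance argument, that folding $t_0$ into genesis and deleting it from the DAG changes neither the total money $M$ nor the confirmation status of any transaction other than $t_0$ itself.

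First I would pin down the inputs of $t_0$. Since $t_0$ is the first transaction confirmed during the execution and, by Definition \ref{def:confirmed}, every transaction producing an input of $t_0$ must already be confirmed, these producers can only be the genesis, the unique transaction confirmed ``from the start'': any non-genesis producer would be a confirmed transaction ordered before $t_0$, contradicting the choice of $t_0$. Hence the inputs of $t_0$ are genesis outputs and the replacement in the statement is well defined. I then construct $G'$ from $G$ by deleting from genesis the outputs spent by $t_0$ and inserting the outputs of $t_0$ in their place (carrying the validator key $v_0$ indicated in $t_0$), deleting $t_0$, redirecting every transaction that spent an output of $t_0$ so that it spends the corresponding new genesis output, and dropping the references to $t_0$ from the acks that signed it. Because the value $m_{t_0}$ of $t_0$'s outputs equals the value of its inputs, the genesis outputs of $G'$ still sum to $M$, so the threshold $\frac{2}{3}M$ and the adversary bound $M/3$ are unchanged; $G'$ is acyclic, signatures remain valid since the owner keys of the moved outputs are untouched, and $G'$ has exactly one fewer transaction than $G$.

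The engine of the equivalence is the following observation about Definition \ref{def:stake}: for any ack set $A$ with $t_0 \in \past(A)$ in $G$, every validator has the same delegated stake in $\past(A)$ as under the corresponding acks $A'$ in $G'$. Indeed, once $t_0 \in \past(A)$ its inputs are spent and its outputs carry $m_{t_0}$ to $v_0$, which is exactly the genesis state of $G'$, and the spent/unspent status of those outputs is decided by the same redirected transactions in both DAGs. Since removing $t_0$ and relabeling genesis adds no transaction conflicting with a given $t$, this transfers every confirmation whose witness already contains $t_0$: for such a $t$, the image $A_t'$ still exhibits $>\frac{2}{3}M$ signing stake over a conflict-free past, so $t$ is confirmed in $G'$.

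It remains to treat the confirmed $t \neq t_0$ whose witness has $t_0 \notin \past(A_t)$; by Lemma \ref{lem:t_in_past_u} these satisfy $t \in \past(A_{t_0})$, and this is where I expect the main obstacle. Under $A_t$ the coins of $t_0$ still sit with the genesis validators of $t_0$'s inputs, whereas in $G'$ they sit with $v_0$, so reusing $A_t$ verbatim preserves the signers' stake only when $v_0$ itself signs $t$ (so the $m_{t_0}$ migrating to $v_0$ is recaptured) and can otherwise drop it by as much as $m_{t_0}$. The plan is to enlarge the witness to $A_t \cup A_{t_0}$ so that $t_0$ enters the past and the stake-invariance above applies, and then to show, using Corollary \ref{cor:no_conflict} and the fact that $A_{t_0}$ consists only of genesis-validator acks (as $t_0$ is first confirmed), that this enlargement neither pulls a transaction conflicting with $t$ into the past nor leaves the signing stake below $\frac{2}{3}M$. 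Establishing this realignment, namely that no confirmation resting on the pre-$t_0$ genesis assignment is lost when $m_{t_0}$ moves to $v_0$, is the heart of the argument; the reverse inclusion, that $G'$ confirms nothing beyond $G$ without $t_0$, follows from the symmetric per-validator stake comparison.
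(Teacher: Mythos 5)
Your setup is sound: pinning down that $t_0$'s inputs must be genesis outputs (which the paper only assumes implicitly), the explicit construction of $G'$, and the observation that any confirmation whose witness already satisfies $t_0 \in \past(A_t)$ transfers to $G'$ by stake invariance. The genuine gap is in the case you yourself call the heart of the proof, and there your plan would fail: the enlargement $A_t \cup A_{t_0}$ is self-defeating. Bringing $t_0$ into the past is precisely what destroys the stake your signers were using --- once $t_0 \in \past(A_t \cup A_{t_0})$, the genesis outputs that $t_0$ spends no longer count for the validators $V$ holding them (Definition~\ref{def:stake}), the freed value $m_{t_0}$ now sits with $t_0$'s validator $v_0$, and nothing forces $v_0$ to be among the signers of $t$; in $G$, $v_0$ may never have signed $t$ at all. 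So the signing stake over the enlarged past can fall below $\frac{2}{3}M$: the enlargement reproduces inside $G$ exactly the deficit you identified in $G'$ rather than repairing it. (Moreover, $\past(A_{t_0})$ may contain a transaction sharing an input with $t$ --- recall $G$ does contain conflicting confirmed pairs such as $t_x, t_y$, and both can lie in $\past(A_{t_0})$ --- so the enlarged set can also violate the conflict-freeness condition of Definition~\ref{def:confirmed}.) Since you explicitly defer establishing this ``realignment,'' the proof is incomplete at its decisive step, and the deferred claim is false as a statement about reused ack sets.

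The paper closes this case by a different move: it does not reuse the acks of $G$, it re-attributes them. In $G'$, where $t_0$'s outputs are delegated to $v$ in genesis, the validator $v$ issues acks equivalent to those that the genesis validators $V$ of $t_0$'s inputs issued in $G$; these acks carry exactly the stake that the $V$-acks carried, so every witness with $t_0 \notin \past(A_t)$ maps to a witness of equal signing stake, and the $\frac{2}{3}M$ threshold is recovered by construction. What must then be checked is consistency of this merged signing role (an honest $v$ cannot sign conflicting transactions), and that is where Lemma~\ref{lem:t_in_past_u} is deployed: if $V$'s genesis stake contributes to $A_{t_1}$ then $t_0 \notin \past(A_{t_1})$, hence $t_1 \in \past(A_{t_0})$, whereas $v$'s stake can only contribute to a witness whose past contains $t_0$ confirmed; the paper argues that such $t_1$ and $t_2$ cannot conflict, so $v$ can coherently sign everything that $V$ and $v$ signed in $G$. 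This re-attribution of stake-carrying acks, not witness enlargement, is the missing idea in your proposal.
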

\begin{proof}
Let $V$ be the set of validators to whom the inputs of $t_0$ are delegated in genesis, and $v$ be the validator to whom $t_0$ is delegated. Consider some confirmed transaction $t_1$. By Lemma \ref{lem:t_in_past_u} either $t_1 \in \past(A_{t_0})$ or $t_0 \in \past(A_{t_1})$. Since $t_0$ is spending the outputs delegated to $v$, $v$ can only have signed $t_1$ if $t_0 \notin \past(A_{t_1})$. Then $t_1 \in \past(A_{t_0})$. Hence, for any $A_{t_1}$ where $v$ contributes an ack, we have $t_1 \in \past(A_{t_0})$. If the set $V$ contributed an ack to $A_{t_1}$ and $v$ contributed an ack to $A_{t_2}$, $t_1$ and $t_2$ cannot conflict.

Thus, in $G'$ where $t_0$'s inputs are replaced with $t_0$'s outputs in genesis the validators of the outputs can issue acks equivalent to those in $G$. If inputs and outputs of $t$ do not match in value, genesis can contain smaller outputs that can combine to inputs or outputs of $t_0$ in value.
\end{proof}

\begin{theorem}\label{thm:no_conficting_confirmed}
No DAG can be produced by the ABC protocol such that a pair of confirmed transactions are conflicting.
\end{theorem}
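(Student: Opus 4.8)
The plan is to close the contradiction that the proof outline has already set up, assembling the lemmas rather than arguing from scratch. We keep the standing assumptions: for contradiction the protocol produces some DAG containing two conflicting confirmed transactions, and $G$ is such a DAG that is minimal in the number of transactions. We fix the time order $<_{time}$, let $t_0$ be the first confirmed transaction of $G$, and let $t_x, t_y$ be the conflicting confirmed pair.

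First I would rule out that $t_0$ is one of the offending transactions. By Corollary \ref{cor:no_conflict}, no transaction of $G$ conflicts with $t_0$. Since $t_x$ and $t_y$ conflict with each other (Definition \ref{def:conflicts}), neither can equal $t_0$. Hence $t_x$ and $t_y$ are confirmed transactions of $G$ that are both distinct from $t_0$. This is the crucial consequence of the earlier machinery: the first-confirmed transaction is ``safe'', so the double-spend must live strictly elsewhere in the DAG.

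Next I would invoke Lemma \ref{lem:acks_the_same} to obtain a DAG $G'$ in which the inputs of $t_0$ are replaced, inside genesis, by the outputs of $t_0$, and whose set of confirmed transactions equals that of $G$ with $t_0$ removed. Because $t_x, t_y \neq t_0$ and both are confirmed in $G$, both remain confirmed in $G'$. It then remains to verify that they still \emph{conflict} in $G'$. Conflict is determined purely by sharing an input or by depending on an earlier conflicting pair (Definitions \ref{def:depends} and \ref{def:conflicts}); the genesis rewrite only changes the provenance of the money that $t_0$ consumed (its inputs now appear as genesis outputs), leaving untouched every spending relation among $t_x$, $t_y$, and their ancestors. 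Even in the case where $t_x$ and $t_y$ shared an output of $t_0$, that output reappears as a genesis output spent by both, so the shared-input relation persists. Thus $G'$ again exhibits a pair of conflicting confirmed transactions.

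Finally, I would derive the contradiction from minimality. The rewrite deletes $t_0$ while only adjusting the outputs of the single, hard-coded genesis transaction, so $G'$ has strictly fewer transactions than $G$; yet it still contains two conflicting confirmed transactions, contradicting the choice of $G$. The step I expect to be the main obstacle is the bookkeeping hidden in this last paragraph: one must confirm that the genesis rewrite of Lemma \ref{lem:acks_the_same} yields a \emph{bona fide} protocol execution, not merely an abstract DAG. Concretely, the substituted genesis outputs must match the input/output values of $t_0$ (which Lemma \ref{lem:acks_the_same} handles by splitting into smaller combinable outputs), and the adversary-stake bound must be preserved. The latter holds because the owners and validators attached to the relabeled outputs are unchanged, so the adversary still controls less than $M/3$ in $G'$; making this explicit is what secures the contradiction and proves the theorem.
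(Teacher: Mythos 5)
Your proposal is correct and follows essentially the same route as the paper's own (much terser) proof: use Corollary \ref{cor:no_conflict} to rule out $t_0$ being part of the conflicting pair, then apply Lemma \ref{lem:acks_the_same} to obtain a strictly smaller DAG $G'$ still containing confirmed conflicting transactions, contradicting minimality of $G$. The extra bookkeeping you supply --- that the conflict relation survives the genesis rewrite and that $G'$ remains a valid execution respecting the adversary-stake bound --- is detail the paper leaves implicit, not a deviation in approach.
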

\begin{proof}
Suppose $t_0$ is the first transaction confirmed in $G$ during the execution of the protocol. By Corollary \ref{cor:no_conflict}, no transaction in $G$ conflicts with $t_0$. By Lemma \ref{lem:acks_the_same}, there is a DAG $G'$ containing fewer transactions than $G$ with some pair $t_x$, $t_y$ of conflicting transactions confirmed, a contradiction with minimality of $G$.
\end{proof}

\begin{corollary}
ABC protocol satisfies ABC Consensus.
\end{corollary}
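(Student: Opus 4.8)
The plan is to assemble this capstone statement directly from the two results already in hand: the reduction corollary stating that establishing the absence of confirmed conflicts is sufficient for ABC Consensus, and Theorem~\ref{thm:no_conficting_confirmed}, which establishes exactly that absence. No new machinery should be needed, so I would keep the argument short and merely verify that each clause of the ABC Consensus definition is accounted for.

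First I would recall how each of the three clauses is discharged. Validity and Honest-Termination follow from the stake assumption of Section~\ref{sec:adversary-stake}: since honest validators always hold more than $\frac{2}{3}M$ of the delegated stake, any transaction $t$ that has no conflicting alternative is signed by honest validators whose combined delegated stake exceeds $\frac{2}{3}M$, so by Definition~\ref{def:confirmed} such a $t$ becomes confirmed, and quickly so when messages are delivered without delay. For Agreement, the key observation is that a confirming ack-set $A_t$ is a self-certifying, transferable proof: any agent can check the signatures and the stake threshold in $\past(A_t)$ locally, so once one honest agent accepts $t$, every honest agent accepts the identical $t$ upon receiving $A_t$. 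Thus the only part of Agreement not yet settled is the clause that no two conflicting transactions are accepted, i.e., that a conflicting pair is never simultaneously confirmed.

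It remains to supply that final ingredient, which is precisely the content of Theorem~\ref{thm:no_conficting_confirmed}: no DAG producible by the protocol contains a confirmed conflicting pair. Invoking it closes the Agreement requirement and hence all three clauses, completing the proof. I expect no obstacle in the corollary itself --- the genuine difficulty lives entirely in Theorem~\ref{thm:no_conficting_confirmed}, whose minimal-counterexample argument (via Lemmas~\ref{lem:no_unconfirmed_in_G}--\ref{lem:acks_the_same} and Corollary~\ref{cor:no_conflict}) does the real work. The one point I would take care to check is that the Validity/Termination argument and the Agreement argument rest on a consistent stake accounting (Definition~\ref{def:valueshift}), so that the ``more than $\frac{2}{3}M$ honest'' bound used to confirm honest transactions is the same bound exploited in ruling out conflicting confirmations; once that consistency is noted, the combination is immediate.
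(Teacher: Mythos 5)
Your proposal matches the paper's own route exactly: the paper leaves this corollary proofless precisely because it follows by combining the earlier reduction corollary (Validity and Honest-Termination from the honest-stake bound, Agreement reduced to no confirmed conflicting pair via the transferability of $A_t$) with Theorem~\ref{thm:no_conficting_confirmed}. Your reconstruction of that reduction and the final invocation of the theorem is correct and complete.
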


\begin{figure}
\centering
\begin{tikzpicture}

\node[draw=black, minimum size=7mm, inner sep=1pt, fill=blue!20, thick, rectangle](GENESIS){};
 
\node[below of=GENESIS,
  yshift=-0.25cm,
  anchor=center]
  {Genesis};

\node[draw=black, minimum size=7mm, inner sep=1pt, thick, rectangle, fill=blue!20,
right of=GENESIS,
yshift=0.5cm,
xshift=0.5cm,
anchor=center](T1){};

\draw (T1) edge[-latex, semithick] (GENESIS);

\node[draw=black, minimum size=7mm, inner sep=1pt, thick, rectangle, fill=blue!20,
right of=GENESIS,
yshift=-0.5cm,
xshift=0.5cm,
anchor=center](T2){};

\draw (T2) edge[-latex, semithick] (GENESIS);

\node[right of=T1,
yshift=-0.5cm,
xshift=0.5cm,
anchor=center](DOTS){\LARGE $\dots$};

\node[above of=DOTS,
yshift=0.5cm,
anchor=center](ADOTS){};

\node[below of=DOTS,
yshift=-0.5cm,
anchor=center](BDOTS){};

\node[right of=DOTS, fill=blue!20,
yshift=-0.75cm,
xshift=0.5cm,
anchor=center,
draw=black, minimum size=7mm, inner sep=1pt, thick, rectangle](V){$t$};

\draw (V) edge[-latex, semithick] ([yshift=-2mm] DOTS.east);
\draw (V) edge[-latex, semithick] ([yshift=3mm] BDOTS.east);

\node[right of=V,
yshift=0.6cm,
xshift=0.5cm,
anchor=center,
draw=black, minimum size=7mm, inner sep=1pt, fill=gray!5, thick, circle](V1){$v_1$};

\draw (V1) edge[-latex, semithick] (V);
\draw (V1) edge[-latex, semithick] (DOTS.east);

\node[right of=V,
yshift=-0.5cm,
xshift=0.5cm,
anchor=center,
draw=black, minimum size=7mm, inner sep=1pt, fill=gray!5, thick, circle](V2){$v_2$};

\draw (V2) edge[-latex, semithick] (V);
\draw (V2) edge[-latex, semithick] (BDOTS.east);


\node[right of=DOTS, fill=blue!20,
yshift=1cm,
xshift=1cm,
anchor=center,
draw=black, minimum size=7mm, inner sep=1pt, thick, rectangle](U){$t'$};

\draw (U) edge[-latex, semithick] ([yshift=2mm] DOTS.east);
\draw (U) edge[-latex, semithick] ([yshift=-3mm] ADOTS.east);

\node[right of=U,
yshift=0.6cm,
xshift=0.5cm,
anchor=center,
draw=black, minimum size=7mm, inner sep=1pt, fill=gray!5, thick, circle](U1){$v_1'$};

\draw (U1) edge[-latex, semithick] (U.east);
\draw (U1) edge[-latex, semithick] (ADOTS.east);

\node[right of=U,
yshift=-0.5cm,
xshift=1.5cm,
anchor=center,
draw=black, minimum size=7mm, inner sep=1pt, fill=gray!5, thick, circle](U2){$v_2'$};

\draw (U2) edge[-latex, semithick] (U);
\draw (U2) edge[-latex, semithick] (V1);

\end{tikzpicture}

\caption{Example illustration of $t \in \past(A_{t'})$. Nodes $v_i$ are acks in $A_t$ and $v_i'$ are acks in $A_{t'}$. If $t$ and $t'$ are confirmed, then $t \in \past(A_{t'})$ or $t' \in \past(A_t)$. Then, $t$ and $t'$ cannot conflict.}
\label{fig:prop1}
\end{figure}
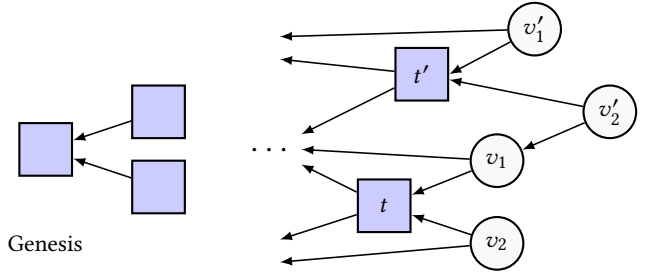

\section{Efficiency and Economy}
\label{sec:improvements}

In this section, we present some practical thoughts that will help ABC succeed.

\subsection{Parallelization}

In some limited scenarios, scaling the throughput by parallel processing is impossible in any system. For example, if all transactions pass the same token in a single chain of dependent transactions, the transactions have to be confirmed in sequence. In ABC, validity of confirmations depends on the set of validators, so similarly, if the set of validators changes dramatically and rapidly all the time, these changes have to be processed sequentially. However, if we rule out these corner cases, validators can parallelize signing and processing of transactions in ABC. Thus, if we increase the number of machines (with constant bandwidth each) at the validator's disposal, the throughput of ABC increases without  limit.

\paragraph{\bf Parallel signing of transactions.} A validator $v$ can split the space of possible input public keys between multiple servers, for example based on the first few characters of the public key. These servers can then individually check that the inputs of a transaction have not been spent already. The transaction needs to be routed only to a small subset of $v$'s machines, which remember and determine whether the validator has signed any transactions spending the same inputs before. Some number of transactions determined to be safe to sign can be combined by the machines in parallel in a Merkle tree root to be included in the next ack in logarithmic time.

\paragraph{\bf Determining transaction confirmation in parallel.} Again, the transaction space can be split between machines that listen to transactions being broadcast in the network. Based on issued and confirmed transactions, each machine can compute the lower and upper bound of how much the stake delegated to each validator changes associated with the assigned transaction space. If transactions $t$ are signed by more validators than the bare operational minimum, the transaction can easily be determined to be confirmed without computing the exact $A_t$, $\past(A_t)$, or the exact delegated stakes associated with validator signatures.

This method is effective in realistic scenarios where more validators than the bare minimum are honest and the transactions do not shift the delegated stake faster than transaction confirmation can be computed.

In some borderline scenarios, computing $A_t$ and $\past(A_t)$ exactly might be unavoidable. For example, suppose the adversary acquires and delegates to itself very close to one-third of the stake, and subsequently refrains from participating in transaction signing. Then all honest validators must sign all transactions, and each transaction might shift the delegated stake such that different combinations of validator signatures confirm a transaction.

\subsection{Pruning the Transaction DAG}

Blockchain systems typically append new transactions to some data structure consisting of previous transactions. In this way, the shared ledger grows over time and requires newly joined participants to process all transactions that took place to date. It is often impossible for the system to completely remove past, intermediate transactions from the data structure and relieve the participants from processing or storing them. Additional protocols might be devised to refer to transactions concisely, such as \cite{kiayias2017non} for proof-of-work blockchains.

We give the outline of how to permanently prune some intermediate transactions from the DAG in ABC. After some time passes and the transaction DAG contains redundant information, a special message called a checkpoint might be created (by any agent) that references a set of acks and transactions $T$ and summarizes the state of the system in $\past(T)$ by listing all confirmed unspent transaction outputs (for example arranged in a Merkle tree) and the distribution of stake among validators in $\past(T)$. Then, some validators that accounted for more than two-thirds of the stake at some point in $\past(P)$ confirm the checkpoint much like they would confirm a normal transaction appended to $\past(P)$.

When the validators listed in the checkpoint issue the next ack after observing the checkpoint, they reference the checkpoint and indicate the summary of transactions outside of $\past(T)$ they have signed up to this point.

Any agents newly joining the system only needs to process $\past(P)$ and the checkpoint with subsequent acks, avoiding processing any spent outputs that took place in $\past(T) \setminus \past(P)$.

For example, a set of validators accounting for more than two-thirds of the stake in the genesis might sign a checkpoint after some time, effectively making the checkpoint a new genesis. An example is illustrated in Figure \ref{fig:checkpoint} in the appendix.

This checkpointing process relies on similar principles as normal transaction confirmation and does not solve consensus. The validators confirm that the checkpoint succinctly summarizes $\past(T)$, but transactions included in the checkpoint might already be spent when the checkpoint is created (which would be revealed by examination of subsequent acks).

\subsection{Transaction Fees}
\label{sec:txfees}

To incentivize maintaining of the system by the validators and prevent spamming attacks, transaction fees should be introduced. ABC is not fundamentally associated with any fee structure in particular and many alternative fee structures are possible. We present an example fee structure.
As acks only serve to confirm transactions, they pay no fee.

Since ABC does not establish consensus, we will refrain from attempting to choose an agreed upon subset of validators that should receive the fee from a particular transaction. Instead, we suggest that all validators are eligible to a portion of every transaction fee. For example, if $m$ money (of the total of $M$) is delegated to a validator $v$, $v$ is granted a fee share $\frac{m}{M} \fee(t)$, where $\fee(t)$ is the transaction fee paid by the issuer of transaction $t$. The fee amount might depend on the size of the representation of $t$. Unlike Bitcoin, the fee cannot easily be determined by a market mechanism, as we need more than two-thirds of the validator power to sign transaction $t$. 

Agents can freely decide whom to choose as a validator. Validators $v$ might have an incentive to reimburse the agents that have delegated their stake to $v$, or the protocol itself might automatically share the transaction fees of $v$ with the money holders. In this way, the transaction fees can flow back to the participants of the ``proof-of-stake pool''. Similarly to mining pools, validators will try to lure as many possible agents into their pool by offering the best conditions.

In Section \ref{sec:validators} we argued that validators may be financially punished if they are too small. This punishment can be incorporated in the fees, e.g., by simply computing the fee to be 0 if the delegated stake of a validator is out of bounds. In case of too small validators, this directly solves the problem.

On the other hand, it is impossible to prevent or mitigate a validator being too large (too much delegated stake), as the validator can split into two validators to stay below the stake threshold. In other words, a large validator can hide behind multiple identities. Note that such a problem is by no means exclusive to ABC. All other blockchain protocols have this issue, e.g. in Bitcoin a mining pool may simply split if it was seen as ``too big''.


\subsection{Money Creation}

If the fees are collected according to the previous section, the fees paid by transaction issuers equal the fees collected by those confirming them. If some transactions are never ack'ed by some individual validators, money will be destroyed, and the total amount of money $M$ may shrink over an extended period of time. However, for economic reasons, we rather might want that the amount of money increases over time. If we introduce 2\% new money each year (as an inflation target), agents will be disincentivized from hoarding their money, but actually use the system. For example, the collected transaction fee might be multiplied by some constant $\alpha > 1$: The issuer is paying $\fee(t)$, but a validator with a delegated stack of $m$ will collect $\alpha \frac{m}{M} \fee(t)$. Assuming every agent possesses less than one-third of the overall stake at any point, issuing transactions still incurs a cost as long as $\alpha \le 3$.

As an additional role in Bitcoin and related systems, proof-of-work serves to distribute newly created money in an unbiased way. ABC could employ proof-of-work for this purpose as well. For this purpose, transactions could be allowed to include proof-of-work and receive an extra amount of stake to spend as an output. However, for these rewards to vary over time, we would need to introduce some mechanism for the protocol to record the passage of time, and we leave that to future work.

\subsection{Smart Contracts}
\label{sec:smartcontracts}

Open smart contracts are not easily possible with ABC. Even an account (output) that two partners Alice$_1$ and Alice$_2$ can spend may cause trouble. If Alice$_1$ and Alice$_2$ concurrently try to buy some goods using two transactions $t_1$ and $t_2$ respectively, they might end up the situation described in Section \ref{sec:relaxing-consensus}.  While the input may have enough value to pay for both goods, issuing both transactions simultaneously may be seen as two conflicting transactions, and as such as a double-spend.

So the two Alices need to be sure that they are not using the same input at the same time. For example, the two Alices can initially set up a transaction that sends the money in the joint account to two new accounts that they control each. Then each Alice can spend the money from her account.

The issue becomes more intriguing with completely open smart contracts that can be called by anybody in the network (for instance, implementing a gambling service). In this case, ABC would need to be augmented with another mechanism on top, ordering transactions for the same smart contract to make sure that concurrent transactions (``double-spends'') for that smart contract do not happen. Are we back to having to implement a full consensus as in Definition \ref{def:consensus}? Yes and no. Clearly such a service needs to totally order all incoming transactions, in other words, deciding which transaction should be presented to ABC first.

However, this ordering overhead is only necessary for completely open smart contracts, and smart contracts can have separate ordering services. Traditional BFT/blockchain protocols totally order \textit{all} transactions. This is the root of all problems, as it introduces an inherent bottleneck in the design of a system.





\section{Related Work}\label{sec:related}

\paragraph{\bf Permissioned systems.} Even though ABC is a permissionless system, it makes sense to compare it to some permissioned systems as well.

Traditionally, distributed ledgers \cite{lamport1998part,castro1999practical} operate with a carefully selected committee of trusted machines. Such systems are called permissioned. The committee repeatedly decides which transactions to accept, using some form of consensus: The committee agrees on a transaction, votes on and commits that transaction, and only then moves forward to agree on the next transaction.

In a work related to ours, Gupta \cite{gupta2016non} proposes a permissioned transaction system that does not rely on consensus. In this design, a static set of validators is designated to confirm transactions. Our concepts of Section \ref{sec:improvements} (parallization, fees, etc.) do work in the permissioned setting as well, and could be applied to this work.

The authors of \cite{guerraoui2019consensus} show that the consensus number of a Bitcoin-like cryptocurrency is $1$, or in other words, that consensus is not needed. The paper provides an analysis and discussion of which applications rely on consensus and to what extent, all of which is directly relevant to ABC. The authors also argue that parallels can be drawn between a permissioned transaction system and the problem of reliable broadcast \cite{malkhi1997secure}.

HoneyBadger BFT \cite{miller2016honey} provides an asynchronous permissioned system by relying on advanced cryptographic techniques with full consensus. Again, the main differences from ABC are that the system is permissioned, much more involved, and reliant on randomization.

The authors of \cite{guerraoui2020dynamic} introduce a protocol based on reliable broadcast that allows participants to join and leave the system. In contrast to ABC, the protocol consists of multiple rounds of communication to agree on nodes joining or leaving the system and does not feature a functionality to delegate one's role in maintaining the system. Node communication volume increases with the number of participants, therefore it cannot be applied in permissionless contexts.

\paragraph{\bf Permissionless systems.} Bitcoin \cite{nakamoto2008bitcoin} radically departed from the established model and became the first permissionless blockchain. In the Bitcoin system, there is no fixed committee; instead, everybody can participate. Bitcoin achieves this by using proof-of-work. Proof-of-work is a randomized process tying computational power and spent energy to the system's security, while also requiring synchronous communication. However, Bitcoin's form of consensus hardly satisfies the traditional consensus definition. Instead of terminating at any point, the extent to which the consensus is ensured raises over time, approaching but never reaching certainty. More precisely, in Bitcoin transactions are never finalized, and can be reverted with ever decreasing probability. 

Similarly to Bitcoin, ABC allows permissionless participation. In contrast to Bitcoin, ABC does not rely on proof-of-work or randomization, features parallelizability and finality, and works under full asynchrony.

To address the problems associated with proof-of-work, proof-of-stake has been suggested, first in a discussion on an online forum \cite{proofofstakepost}. Proof-of-stake blockchains are managed by participants holding a divisible and transferable digital resource, as opposed to holding hardware and spending energy. Academic works proposing proof-of-stake systems include designs such as Ouroboros \cite{kiayias2017ouroboros} or Algorand \cite{gilad2017algorand}. Proof-of-stake blockchains solve consensus and thus do not parallelize without compromises. The reliance on synchronous communication and randomization in proof-of-stake are potential security risks. Despite avoiding these pitfalls, ABC is also simpler.

\paragraph{\bf DAG blockchains.}
To increase the relatively modest throughput of Bitcoin, some proof-of-work protocols employ directed acyclic graphs in the place of Bitcoin's single chain. SPECTRE \cite{sompolinsky2016spectre} is likely the closest relative of ABC among such protocols, as it relaxes consensus similarly to ABC. However, the similarities are largely superficial, as SPECTRE remains a proof-of-work protocol, employs different techniques, and does not share the other of ABC's advantages. SPECTRE improves many aspects of Bitcoin, but with respect to the harsh criteria of Table \ref{tbl:comparison}, SPECTRE can only earn a tick at permissionless.


\section{Conclusions}

In this paper we presented ABC, an asynchronous blockchain without consensus.
ABC provides the functionality of Bitcoin without consensus, without proof-of-work, without requiring synchronous communication, without relying on randomness. ABC is scalable and with finality. The design of ABC is arguably the simplest possible design for a variety of blockchain applications.

ABC provides an advantageous solution for applications like cryptocurrencies, where honest participants do not generate conflicting status updates. However, a general smart contracts platform like Ethereum requires full consensus. Implementing open smart contracts is not impossible with ABC, but it would need another layer of indirection as sketched in Section \ref{sec:smartcontracts}. Adding this extra layer would  check the last box in Table \ref{tbl:comparison}.


\bibliographystyle{ACM-Reference-Format}
\bibliography{refs}

\section*{Appendix}

\appendix

\paragraph{\bf Asynchrony.}

In Table \ref{tbl:comparison}, we call many protocols not asynchronous. In this section, we quickly want to justify this classification.

For many blockchain protocols such as Bitcoin, the underlying network being asynchronous would be devastating. The adversary could simply split the network in two, half of the agents on one side, and half of the agents on the other side. Then the adversary can double-spend all its money on both sides. Since the two sides cannot communicate due to temporarily not receiving any messages from the other side, both sides will eventually have the double-spending transaction in their branch of the blockchain. The blocks containing the transactions will eventually be confirmed by enough (e.g., six) blocks, and the transactions are considered final by merchants. By controlling communication, the adversary can double-spend its money.

One may think that BFT protocol such as PBFT \cite{castro1999practical} can handle asynchrony better. To some extent this is true, as PBFT will not allow such a double-spend, since PBFT and similar protocols provide safety even in asynchronous networks. However, asynchrony is still a problem for BFT protocols such as PBFT, as no more progress (liveness) can be made. PBFT and other protocols handle this issue by adopting a semi-synchronous model which is increasing the time limits whenever messages do not arrive in time. This may dramatically slow down the protocol, as a byzantine agent can simply wait with sending messages before timers run out. 

ABC on the other hand does not have to deal with timing assumptions: Whenever a message arrives, the system makes progress towards establishing or confirming a transaction. Few systems, such as HoneyBadger BFT or (consensus-less) broadcast-based protocols, share this resiliency to asynchrony with ABC.

\begin{figure*}
\centering
\begin{subfigure}{\textwidth}
\centering
\begin{tikzpicture}

\node[tx_conf](GENESIS)
 {$\begin{aligned}
   &\to (p_1, 1), v_1\\
   &\to (p_2, 3), v_2
   \end{aligned}$};
 
\node[below of=GENESIS,
  yshift=-0.1cm,
  anchor=center]
  {Genesis};

\node[tx_conf,
right of=GENESIS,
yshift=0cm,
xshift=2.5cm,
anchor=center
](T1)
 {$\begin{aligned}
       (p_2, 3) &\to (p_3, 3)\\
       & v_3
   \end{aligned}$};

\draw (T1.west) edge[-latex, semithick] (GENESIS.east);

\node[tx_conf,
right of=T1,
yshift=0cm,
xshift=3cm,
anchor=center
](T5)
 {$\begin{aligned}
       (p_3, 3) &\to (p_4, 3)\\
       & v_4
   \end{aligned}$};

\draw (T5.west) edge[-latex, semithick] (T1.east);

\node[tx_conf,
right of=T5,
yshift=-1.5cm,
xshift=2.5cm,
anchor=center
](T3)
 {$\begin{aligned}
       &\to (p_4, 3), v_4\\
       &\to (p_1, 1), v_1
   \end{aligned}$};

\node[right of=GENESIS,
yshift=1.5cm,
xshift=3cm,
anchor=center,
draw=black, minimum size=7mm, inner sep=1pt, fill=gray!5, thick, circle](V2){$v_2$};

\node[right of=V2,
yshift=0cm,
xshift=3.25cm,
anchor=center,
draw=black, minimum size=7mm, inner sep=1pt, fill=gray!5, thick, circle](V3){$v_3$};

\draw (V3) edge[-latex, semithick] (T5.north);

\draw (V2) edge[-latex, semithick] (T1.north);
\draw (V3) edge[-latex, semithick] (V2);
\draw (T3.west) edge[-latex, semithick] (V3);
\draw (T3.west) edge[-latex, semithick] (GENESIS.south);

\node[above of=T3,
  yshift=0cm,
  anchor=center]
  {Checkpoint};

\node[right of=T3,
yshift=0cm,
xshift=1cm,
anchor=center,
draw=black, minimum size=7mm, inner sep=1pt, fill=gray!5, thick, circle](V22){$v_2$};

\draw (V22) edge[-latex, semithick] (T3.east);
\end{tikzpicture}

\caption{Example transaction DAG. Some agent issued a checkpoint and the validator $v_2$ signs the checkpoint as accurate.}
\end{subfigure}

\vspace{2em}

\begin{subfigure}{\textwidth}
\centering
\begin{tikzpicture}

\node[tx_conf](GENESIS)
 {$\begin{aligned}
   &\to (p_1, 1), v_1\\
   &\to (p_2, 3), v_2
   \end{aligned}$};
 
\node[below of=GENESIS,
  yshift=-0.1cm,
  anchor=center]
  {Genesis};

\node[tx_conf,
right of=GENESIS,
yshift=0cm,
xshift=2.5cm,
anchor=center
](T1)
 {$\begin{aligned}
       (p_2, 3) &\to (p_3, 3)\\
       & v_3
   \end{aligned}$};

\draw (T1.west) edge[-latex, semithick] (GENESIS.east);

\node[tx_conf,
right of=T1,
yshift=0cm,
xshift=3cm,
anchor=center
](T5)
 {$\begin{aligned}
       (p_3, 3) &\to (p_4, 3)\\
       & v_4
   \end{aligned}$};

\draw (T5.west) edge[-latex, semithick] (T1.east);

\node[tx,
right of=GENESIS,
yshift=3cm,
xshift=2.5cm,
anchor=center
](T4)
 {$\begin{aligned}
       (p_1, 1) \to (p_5, 1)
   \end{aligned}$};

\draw (T4.west) edge[-latex, semithick] (GENESIS.east);

\node[tx_conf,
right of=T5,
yshift=-1.5cm,
xshift=2.5cm,
anchor=center
](T3)
 {$\begin{aligned}
       &\to (p_4, 3), v_4\\
       &\to (p_1, 1), v_1
   \end{aligned}$};

\node[right of=GENESIS,
yshift=1.5cm,
xshift=3cm,
anchor=center,
draw=black, minimum size=7mm, inner sep=1pt, fill=gray!5, thick, circle](V2){$v_2$};

\node[right of=V2,
yshift=0cm,
xshift=3.25cm,
anchor=center,
draw=black, minimum size=7mm, inner sep=1pt, fill=gray!5, thick, circle](V3){$v_3$};

\node[right of=T4,
yshift=0cm,
xshift=5cm,
anchor=center,
draw=black, minimum size=7mm, inner sep=1pt, fill=gray!5, thick, circle](V4){$v_4$};

\draw (V3) edge[-latex, semithick] (T5.north);

\draw (V2) edge[-latex, semithick] (T1.north);
\draw (V4) edge[-latex, semithick] (T4.east);
\draw (V3) edge[-latex, semithick] (V2);
\draw (T3.west) edge[-latex, semithick] (V3);
\draw (V4) edge[-latex, semithick] (V3);
\draw (T3.west) edge[-latex, semithick] (GENESIS.south);

\node[above of=T3,
  yshift=0cm,
  anchor=center]
  {Checkpoint};

\node[right of=T3,
yshift=0cm,
xshift=1cm,
anchor=center,
draw=black, minimum size=7mm, inner sep=1pt, fill=gray!5, thick, circle](V22){$v_2$};

\draw (V22) edge[-latex, semithick] (T3.east);

\node[right of=T3,
yshift=2cm,
xshift=1cm,
anchor=center,
draw=black, minimum size=7mm, inner sep=1pt, fill=gray!5, thick, circle](V44){
$\begin{aligned}
		&v_4:\\
		(p_1, 1) &\to (p_5, 1)
   \end{aligned}$
};

\draw (V44) edge[-latex, semithick] (V4);
\draw (V44) edge[-latex, semithick] (T3.east);

\end{tikzpicture}

\caption{The validator $v_4$ repeats its' ack that was not included in the checkpoint.}
\end{subfigure}

\vspace{2em}

\begin{subfigure}{\textwidth}
\centering
\begin{tikzpicture}

\node[tx_conf](GENESIS)
 {$\begin{aligned}
   &\to (p_1, 1), v_1\\
   &\to (p_2, 3), v_2
   \end{aligned}$};
 
\node[below of=GENESIS,
  yshift=-0.1cm,
  anchor=center]
  {Genesis};

\node[tx_conf,
right of=GENESIS,
yshift=0cm,
xshift=6cm,
anchor=center
](T3)
 {$\begin{aligned}
       &\to (p_4, 3), v_4\\
       &\to (p_1, 1), v_1
   \end{aligned}$};

\draw (T3.west) edge[-latex, semithick] (GENESIS.east);

\node[above of=T3,
  yshift=0cm,
  anchor=center]
  {Checkpoint};

\node[right of=T3,
yshift=0cm,
xshift=1cm,
anchor=center,
draw=black, minimum size=7mm, inner sep=1pt, fill=gray!5, thick, circle](V22){$v_2$};

\draw (V22) edge[-latex, semithick] (T3.east);

\node[right of=T3,
yshift=2cm,
xshift=1cm,
anchor=center,
draw=black, minimum size=7mm, inner sep=1pt, fill=gray!5, thick, circle](V44){
$\begin{aligned}
		&v_4:\\
		(p_1, 1) &\to (p_5, 1)
   \end{aligned}$
};

\draw (V44) edge[-latex, semithick] (T3.east);

\end{tikzpicture}

\caption{Agents joining the system do not need to process the transactions summarized by the checkpoint.}
\end{subfigure}

\caption{Example illustration of a checkpoint.}

\label{fig:checkpoint}
\end{figure*}
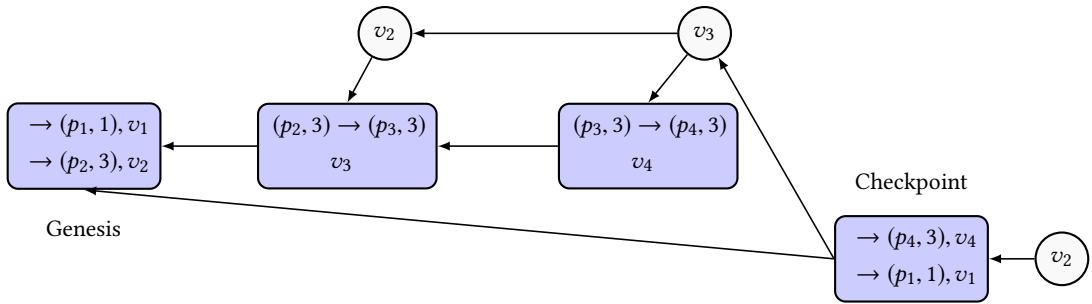
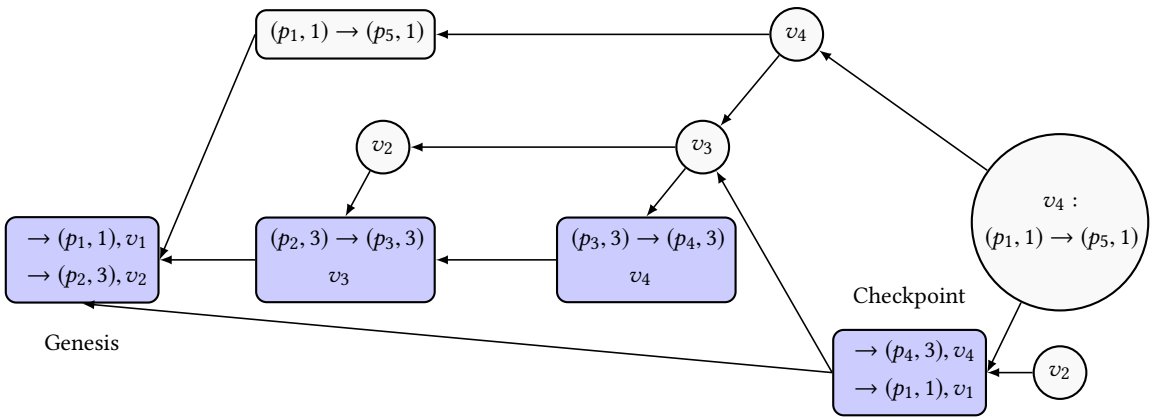
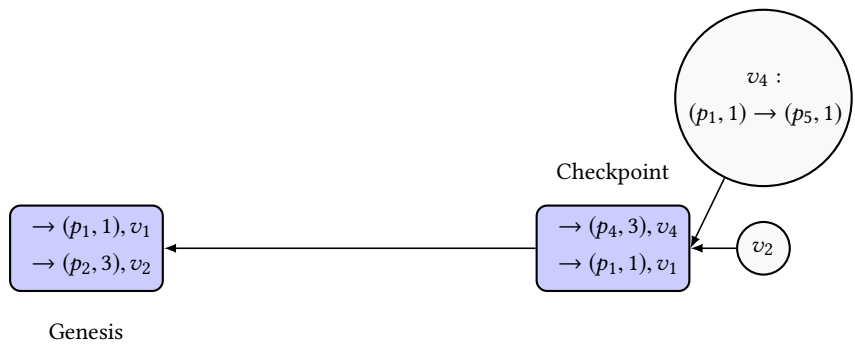

\end{document}